\documentclass[12pt]{article}

\usepackage[margin=1.0in]{geometry}
 
\usepackage{graphicx, subcaption}      
\usepackage{amssymb, amsmath}
\usepackage{tikz}
\usetikzlibrary{patterns,angles,quotes,arrows}
\usepackage[shortlabels]{enumitem}
\usepackage{amsthm}
\usepackage[numbers]{natbib}
\newtheorem{assumption}{Assumption}
\newtheorem{theorem}{Theorem}
\newtheorem{lemma}{Lemma}
\newtheorem{definition}{Definition}
\newtheorem{problemStatement}{Problem Statement}
\newtheorem{proposition}{Proposition}
\newtheorem{remark}{Remark}

\usepackage{tikz}
\usetikzlibrary{patterns,angles,quotes,arrows}
\newcommand{\parenthnewln}{\right.\left.\quad\quad{}}
\usepackage{pbox}

\usepackage{lipsum}

\providecommand{\keywords}[1]{\textbf{\textit{Index terms---}} #1}
\usepackage{cite}

\def\BibTeX{{\rm B\kern-.05em{\sc i\kern-.025em b}\kern-.08em
    T\kern-.1667em\lower.7ex\hbox{E}\kern-.125emX}}
\markboth{}
{Taha Shafa et al: Guaranteed Reachability for Nonlinear Systems with Unknown Dynamics on Riemannian Manifolds}

\begin{document}

\title{Guaranteed Reachability on Riemannian Manifolds for Unknown Nonlinear Systems}
\author{Taha Shafa and
	Melkior Ornik
	\thanks{%
		Submitted for review on April 10, 2024. This work was supported in part by NASA grant 80NSSC22M0070 and the Air Force Office of Scientific Research under award number FA9550-23-1-0131.}%
	\thanks{Taha Shafa and Melkior Ornik are with the Department of Aerospace Engineering and the Coordinated Science Laboratory, University of Illinois Urbana-Champaign, Urbana, USA (e-mail: tahaas2@illinois.edu, mornik@illinois.edu).}
}

\maketitle

\begin{abstract}
Determining the reachable set for a given nonlinear system is critically important for autonomous trajectory planning for reach-avoid applications and safety critical scenarios. Providing the reachable set is generally impossible when the dynamics are unknown, so we calculate underapproximations of such sets using local dynamics at a single point and bounds on the rate of change of the dynamics determined from known physical laws. Motivated by scenarios where an adverse event causes an abrupt change in the dynamics, we attempt to determine a provably reachable set of states without knowledge of the dynamics. This paper considers systems which are known to operate on a manifold. Underapproximations are calculated by utilizing the aforementioned knowledge to derive a guaranteed set of velocities on the tangent bundle of a complete Riemannian manifold that can be reached within a finite time horizon. We then interpret said set as a control system; the trajectories of this control system provide us with a guaranteed set of reachable states the unknown system can reach within a given time. The results are general enough to apply on systems that operate on any complete Riemannian manifold. To illustrate the practical implementation of our results, we apply our algorithm to a model of a pendulum operating on a sphere and a three-dimensional rotational system which lives on the abstract set of special orthogonal matrices.
\end{abstract}

\keywords{Reachable Set Computation, Nonlinear Control Systems, Uncertain Systems, Aerospace Systems, Autonomous Systems}

\section{Introduction}

Control algorithms have classically required models to manipulate system behavior, however to deal with scenarios where such models are not available, active controls research often aims to extend the capabilities of automation by developing sophisticated control algorithms under modeling uncertainties using some nominal representation of the dynamics. We go a step beyond to construct certifiably attainable control capabilities in real time for systems without knowledge of the system dynamics.

Work presented in \citep{shafa2022reachability, shafa2022maximal} sits within this general area, but those papers are unable to handle control systems that operate outside Euclidean space \citep{jurdjevic1972control,altafini2007feedback,zhu2017geometric}, which are becoming increasingly more important as state-of-the-art autonomous research involves adapting geometric methods for control of such systems \citep{lee2010geometric,chirikjian2011stochastic,schattler2012geometric,liu2019affine}. We thus extend results to systems operating on general Riemannian manifolds \citep{jost2008riemannian,bishop2012tensor} to incorporate this large class of systems in our novel theory. Motivated by examples such as a satellite becoming damaged in orbit \citep{benninghoff2014autonomous}, legged robotics applications \citep{fan2020geometric}, or other autonomous systems operating on manifolds experiencing abrupt changes in their dynamics \citep{spong1987integral,kilin2015spherical}, we aim to underapproximate the unknown system's set of reachable states \citep{brockett1976nonlinear,isidori2013nonlinear} with limited information about the dynamics. We call such a set the \textit{guaranteed reachable set} (GRS).

Unlike much of the previous work in reachable set analysis, the results presented in this paper focus on \textit{guaranteed} reachability, i.e., states that are \textit{provably} reachable, whereas much of previous work \citep{mitchell2003overapproximating,kong2018reachable} focuses on \textit{overapproximations}, i.e., states that are \textit{optimistically} reachable. We do so to develop the autonomous capability for a system to determine what it can provably achieve without knowledge of its dynamics. The primary contribution of this paper is to produce a meaningful underapproximation of the GRS for a nonlinear control-affine system operating on any complete Riemannian manifold. We do so without knowing almost anything about the system dynamics, only assuming knowledge of: (i) the local dynamics at some initial state possibly determined using persistent excitation \citep{ornik2019control}, (ii) Lipschitz bounds on the growth rate of the aforementioned local dynamics gathered using any prior knowledge of the system design and known physical laws, (iii) the Riemannian manifold and corresponding metric tensor on which the unknown system operates. 

Our approach relies on the interpretation of the unknown nonlinear control system as an \textit{ordinary differential inclusion} \citep{aubin2012differential, smirnov2002introduction} (ODI) whose right-hand side is equal to the set of velocities the control system can achieve from any initial state on a Riemannian manifold. For an unknown system, exact velocities cannot be calculated from almost any state on a manifold, but we can determine a family of achievable velocity sets from any state using the local dynamics and Riemannian Lipschitz \citep{canary2006fundamentals} bounds on the growth rate of said dynamics. The intersection of all such sets forms the \textit{guaranteed velocity set} (GVS). We underapproximate the GVS with a set characterized by a ball whose simple geometric properties allow for feasible real-time calculation.

Previous reachability research under uncertainties requires significantly more information to implement than the method presented in this paper. Examples include computing reachable sets for dynamics with bounded disturbances \citep{ding2011reachability,rakovic2006reachability} or parametric uncertainties \citep{mohan2016convex,althoff2008reachability}. More classical approaches of adaptive and robust control have applied similar methods without calculating reachable sets to achieve control objectives for uncertain systems on manifolds \citep{wingo2020adaptively,calinon2020gaussians}, however, such methods assume more information regarding the system dynamics and lack certifiably achievable guarantees from reachability analysis. Previous work focusing on reachability analysis on manifolds \citep{sussmann1987reachability,ayala2017continuity} also assumes complete or partial knowledge of the system dynamics. In contrast, our method requires no significant knowledge in the sense of a nominal model of the true dynamics of the unknown nonlinear control-affine system. Recent examples of data-driven framework implement neural network-based models to approximate reachable sets \citep{thapliyal2023approximating} and Monte Carlo methods to produce probabilistic reachable sets \citep{devonport2020data} but produce overapproximations and avoid producing any results in non-Euclidean space.

The outline of this paper is as follows: we discuss the problem in greater detail in Section II, providing important definitions needed to understand results in subsequent sections. In Section III, we determine an underapproximation of the GVS as a subset of the tangent space for any existing point on a manifold. In Section IV, we determine how the GVS can be utilized to calculate the GRS. We then illustrate by example the theory working a system operating on $SO(3)$ in Section~V. 

\textbf{Notation:}
Throughout this paper, we adopt the following notations. For notations associated with differential geometric terminologies, we refer the reader to \citep{do1992riemannian,jost2008riemannian} for additional background. $\mathbb{R}^{n\times m}$: the set of all $n\times m$ real matrices; $[n]:$ the set $\{1,\hdots,n\}$ where $n \in \mathbb{N}$; $N^T \in \mathbb{R}^{m\times n}$: the transpose of $N~\in~\mathbb{R}^{n\times m}$; $\lambda_i(N)$: the $i$-th eigenvalue of $N$ in descending order; $\sigma_i(N)$: the $i$-th singular value of $N$ in descending order; $N^\dagger$: the Moore-Penrose pseudoinverse of $N$; $\|\cdot\|$: the Euclidean norm on $\mathbb{R}^n$; $\|\cdot\|_1$: the one-norm on $\mathbb{R}^n$; $\mathbb{B}^n(a;b)$: a closed ball in $\mathbb{R}^n$ centered at $a \in \mathbb{R}^n$ with radius $b \geq 0$ under the Euclidean norm; $\mathrm{Tr}(N)$: the trace $N~\in~\mathbb{R}^{n\times m}$; $\mathrm{rank}(N)$: the rank of $N$; $\mathrm{Im}(N)$: the image (range space) of~$N$; $\mathrm{Ker}(N)$: the kernel (null space) of $N$; $(M^n,h)$: an $n$-dimensional Riemannian manifold with a Riemannian metric $h$; $\Gamma(M)$: the space of smooth vector fields on a manifold $M$; $T_xM$: the tangent space at $x \in M$; $TM$: the tangent bundle of $M$; $h_x(u,v)$: the Riemannian inner product of $u,\,v \in T_xM$; $|\cdot|_{h_x}$: the norm induced by the Riemannian inner product $h_x$; $\nabla$: the Levi-Civita connection; $\Tilde{\nabla}$: the flat connection; $\mathcal{L}(c)$: the length of curve $c$; $d(x,y)$: the Riemannian distance between $x$ and $y$; $\tau_p^qV: $ the parallel transport of vectors $v_i \in \Gamma(M)$ where $V = \begin{bmatrix}v_1 & v_2 & \hdots \end{bmatrix}$ from $p$ to $q$, i.e., $\begin{bmatrix}\tau_p^q v_1 & \tau_p^q v_2 & \hdots\end{bmatrix}$.
\section{Problem Statement}

We consider the nonlinear control-affine system $\mathcal{M}(f,G)$ on a manifold $M$ \citep{nijmeijer1990nonlinear,abraham2012manifolds,frankel2011geometry} of the form 

\begin{equation}\label{nonlinearSystem}
\begin{gathered}
    \dot{x}(t) = f(x(t)) + G(x(t))u(t)\\ = f(x(t)) + \sum_{l=1}^mg_l(x(t))u_l(t),\quad x(0) = x_0 \in M
\end{gathered}
\end{equation}

\noindent where the state space $(M^n,h)$ is a complete, connected real $n$-dimensional analytic manifold, which is consistent with standard assumptions from literature \citep{jost2008riemannian} regarding smooth manifolds. The manifold comes equipped with the Riemannian metric $h$; additionally, $x:~[0,\infty) \to M$ and $f,~g_l \in \Gamma(M)$ are globally Riemannian Lipschitz continuous with Riemannian Lipschitz constants $L_f \geq 0$, $L_{g_l} \geq 0$, and $u(t)~\in~\mathbb{B}^m(0;1) = \mathcal{U}$. The notion of a Riemannian Lipschitz function will be formally defined in subsequent sections. We begin by formally defining our first assumption, which is largely consistent with Assumption $1$ of \citep{shafa2022reachability} with the added generalization of \eqref{nonlinearSystem} operating on any complete Riemannian manifold.

\begin{assumption}\label{Assumption 1}
    System $\mathcal{M}(f,G)$ lies on a complete $n$-dimensional Riemannian manifold with a known atlas. We denote $h$ as the corresponding Riemannian metric. Functions $f$ and $G$ are of the form $f(x) = Rr(x)$ and $G(x) = RQ(x)$ where $R:\mathbb{R}^m\to T_xM$ and is constant with respect to the bases of $T_xM$ across all $x \in M$ and $r:~M \to \mathbb{R}^m$, $Q:~M \to \mathbb{R}^{m\times m}$ are functions such that $Q(x_0)$ is invertible. 
\end{assumption}

Manifolds come equipped with smooth atlases $\{(U_\alpha,\psi_\alpha)\}_{\alpha \in \mathcal{A}}$ comprised of charts \citep{do1992riemannian,jost2008riemannian} $(U_\alpha,\psi_\alpha)$ such that $U_\alpha$ are open, $\cup_{\alpha}U_\alpha = M$ and $\psi_\alpha~:~M\to\mathbb{R}^n$. For any coordinate system $(x)$ which holds $p \in U_\alpha$, the $n$ vectors $\left.\frac{\partial}{\partial x^1}\right|_p,\hdots,\left.\frac{\partial}{\partial x^n}\right|_p$ form a basis for the $n$-dimensional tangent space $T_pM$ \citep{frankel2011geometry}. Hence, the basis for the tangent space is dependent on the choice of coordinates for $U_\alpha$ and the point $p$, which we know by Assumption \ref{Assumption 1}. Calculations can span multiple charts so long as the local coordinates are adjusted accordingly. 

Assumption \ref{Assumption 1} implies $\mathrm{Im}(f(x)) \subset \mathrm{Im}(G(x))$ with the case of full actuation, i.e., $m = n$, corresponding to $R = I$. We note that $T_xM$ is diffeomorphic to $\mathbb{R}^n$, so the image of matrices composed of vectors in $T_xM$ and $\mathbb{R}^n$ have equivalent definitions as after an implicit mapping between the two spaces. The Riemannian metric in Assumption \ref{Assumption 1} provides us with knowledge of the intrinsic geometry. Tangent spaces for all $x \in M$ may contain different bases, and their connection defines how the basis changes from $T_{x_0}M$ to $T_xM$. When we refer to $R$ as constant, we imply it remains constant with respect to the connection which defines the change of basis from $T_{x_0}M$ to $T_xM$. We will formally define a connection on a manifold in subsequent sections. Similar to previous work \citep{ornik2019control,shafa2022reachability,shafa2022maximal}, we also assume knowledge on the local dynamics and growth rate bounds of system \eqref{nonlinearSystem}.

\begin{assumption}\label{Assumption 2}
    The local dynamics $f(x_0)$ and $g_l(x_0)$ are known as well as Riemannian Lipschitz bounds $L_f,\,L_{g_l}~\in~\mathbb{R}_+$. We also assume knowledge of $\mathrm{Im}(R)$. However, we do not need to know the value of $R$ exactly.
\end{assumption}

Through previous work \citep{ornik2019control}, we can calculate $f(x_0)$, $g_l(x_0)$ with arbitrarily small error using historical trajectory data. Using known physical laws, we may also determine some bound on the rate of change of the dynamics as the system travels along some manifold; to quantify this bound, we need to go over some geometric preliminaries to define the notion of parallel transport and the covariant derivative \citep{do1992riemannian,jost2008riemannian}. We then show how parallel transport is used to define the notion of Lipschitzness on curved manifolds. 

\subsection{Preliminaries}


There are a number of key differences to consider when applying guaranteed reachability theory on systems operating on $\mathbb{R}^n$ versus systems operating on $M$, namely the notion of distance, how to measure velocities for unknown systems, and what trajectories on $M$ satisfy these velocities for some given time horizon. We begin by defining a notion of distance on~$M$ by formally introducing the Riemannian inner product --- also referred to as the Riemannian metric.

\begin{definition}[Riemannian inner product]\label{RiemannianMetric}
    Let $M$ be a smooth manifold. Then, $h$ is a smooth Riemannian inner product if for any $v_1,v_2 \in T_xM$, $h_x(v_1,v_2) = v_1^TH_xv_2$ where $H_x$ is symmetric, positive definite and depends smoothly on $x \in M$.
\end{definition}

Following this definition, we see that the norm induced by the inner product on $T_xM$ is then 

\vskip -5pt

\begin{equation}\label{riemannianNorm}
    |v|_{h_x} = \sqrt{h_x(v,v)} = \sqrt{v^TH_xv}.
\end{equation}

\noindent We also denote $H_x$ as the \textbf{Riemannian metric tensor}. In the right-hand side of \eqref{riemannianNorm}, $H_x$ is a symmetric, positive definite matrix at $x~\in~M$ whose representation depends on the basis vectors of $T_xM$. Note that as a consequence of positive definiteness, $H_x$ is invertible \citep{strang2016introduction}. In Euclidean space, $H_x~=~I$ and \eqref{riemannianNorm} becomes the Euclidean norm. Defining such a Riemannian metric $h$ is always locally possible \citep{jost2008riemannian}. 

We now use the standard definition for vector induced norms on matrices \citep{strang2016introduction} and the Riemannian metric to define the Riemannian matrix norm. Let $v \in T_xM$ and $A \in \mathbb{R}^{n\times n}$. Then,

\vskip -10pt

\begin{equation}\label{riemannianMatrixNorm}
    |A|_{h_x} = \sup_{v\neq 0}\frac{|Av|_{h_x}}{|v|_{h_x}}.
\end{equation}

Note that in the case where $A \in \mathbb{R}^{n\times m}$ and $n > m$, we can instead consider taking the norm of $\bar{A} \in \mathbb{R}^{n\times n}$ where the remaining $n-m$ columns are $0$ since it would produce the same result. For notational ease, we do not differentiate notation in the case where $A \in \mathbb{R}^{n\times m}$ in subsequent sections. Using \eqref{riemannianNorm}, \eqref{riemannianMatrixNorm} we calculate norm equivalence relations between the Riemannian and Euclidean norms for subsequent calculations.

\begin{lemma}\label{Lemma_normEquivalenceRelations}
    Let $v,\,a_i,\,A,$ and $H_x$ be defined as above. Then 
    \begin{equation}\label{vectorNormEquivalency}
        \|H^{-1}_x\|^{-\frac{1}{2}}\|v\| \leq |v|_{h_x} \leq \|H_x\|^{\frac{1}{2}}\|v\|
    \end{equation}
    and 
    \begin{equation}\label{matrixNormEquivalency}
        \left(\frac{\|H_x^{-1}\|^{-1}}{\|H_x\|}\right)^{\frac{1}{2}}\|A\| \leq |A|_{h_x} \leq \|A\|\left(\frac{\|H_x\|}{\|H_x^{-1}\|^{-1}}\right)^{\frac{1}{2}}.
    \end{equation}
\end{lemma}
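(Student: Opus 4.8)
The plan is to reduce both inequalities to the spectral data of the symmetric positive definite matrix $H_x$. Write $\lambda_{\min}$ and $\lambda_{\max}$ for the smallest and largest eigenvalues of $H_x$; by positive definiteness these are strictly positive, and the spectral theorem gives the Rayleigh bounds $\lambda_{\min}\|v\|^2 \le v^T H_x v \le \lambda_{\max}\|v\|^2$ for every $v$. The only additional ingredient is to re-express these eigenvalue extremes through the operator norms in the statement: since $H_x$ is symmetric positive definite, $\|H_x\| = \lambda_{\max}$, while the eigenvalues of $H_x^{-1}$ are the reciprocals of those of $H_x$, so $\|H_x^{-1}\| = \lambda_{\min}^{-1}$, i.e. $\lambda_{\min} = \|H_x^{-1}\|^{-1}$. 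Recall $H_x$ is invertible precisely because it is positive definite, as noted after \eqref{riemannianNorm}.

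For the vector inequality \eqref{vectorNormEquivalency}, I would begin from $|v|_{h_x}^2 = v^T H_x v$ in \eqref{riemannianNorm}, substitute the Rayleigh bounds, and take square roots (valid since $v^T H_x v \ge 0$) to obtain $\sqrt{\lambda_{\min}}\,\|v\| \le |v|_{h_x} \le \sqrt{\lambda_{\max}}\,\|v\|$. Inserting the identities $\lambda_{\max} = \|H_x\|$ and $\lambda_{\min} = \|H_x^{-1}\|^{-1}$ yields \eqref{vectorNormEquivalency} verbatim. Writing $c_- = \|H_x^{-1}\|^{-\frac{1}{2}}$ and $c_+ = \|H_x\|^{\frac{1}{2}}$, this is the compact statement $c_-\|v\| \le |v|_{h_x} \le c_+\|v\|$.

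For the matrix inequality \eqref{matrixNormEquivalency}, I would apply \eqref{vectorNormEquivalency} twice inside the quotient defining $|A|_{h_x}$ in \eqref{riemannianMatrixNorm}: once to the numerator $Av$ and once to the denominator $v$ (both measured in the Riemannian norm, as \eqref{riemannianMatrixNorm} prescribes). This sandwiches the ratio pointwise as $\frac{c_-}{c_+}\frac{\|Av\|}{\|v\|} \le \frac{|Av|_{h_x}}{|v|_{h_x}} \le \frac{c_+}{c_-}\frac{\|Av\|}{\|v\|}$. Because the outer factors $c_\pm$ are constants independent of $v$, they pull out of the supremum over $v\neq 0$, and $\sup_{v\neq 0}\|Av\|/\|v\| = \|A\|$ by definition of the Euclidean induced matrix norm. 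Collecting the constants into $\frac{c_-}{c_+} = (\|H_x^{-1}\|^{-1}/\|H_x\|)^{\frac{1}{2}}$ and $\frac{c_+}{c_-} = (\|H_x\|/\|H_x^{-1}\|^{-1})^{\frac{1}{2}}$ gives \eqref{matrixNormEquivalency}.

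The computations are elementary, so the only care needed is bookkeeping, and I expect two places to demand attention rather than any deep geometry. First, the identifications $\|H_x\|=\lambda_{\max}$ and $\|H_x^{-1}\|=\lambda_{\min}^{-1}$ genuinely use symmetry and positive definiteness; for a general matrix one would instead pass through singular values. Second, the supremum step must be justified in both directions: the upper bound follows by taking $\sup$ of both sides of the pointwise inequality, while the lower bound follows because the pointwise estimate forces $|A|_{h_x} = \sup_{v\neq 0}\frac{|Av|_{h_x}}{|v|_{h_x}} \ge \frac{c_-}{c_+}\sup_{v\neq 0}\frac{\|Av\|}{\|v\|}$. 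The crucial observation here is that the factor $c_+/c_-$ (rather than a single constant) appears on each side precisely because the denominator of \eqref{riemannianMatrixNorm} is the Riemannian norm $|v|_{h_x}$ and not the Euclidean norm; a Euclidean denominator would instead produce the weaker one-sided constants $c_-$ and $c_+$.
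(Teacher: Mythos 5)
Your proposal is correct and follows essentially the same route as the paper's own proof: Rayleigh/eigenvalue bounds for the vector inequality (with the identifications $\|H_x\|=\lambda_{\max}$ and $\|H_x^{-1}\|^{-1}=\lambda_{\min}$ from symmetry and positive definiteness), then substitution of those bounds into the numerator and denominator of the supremum in \eqref{riemannianMatrixNorm} for the matrix inequality. Your write-up is in fact slightly more careful than the paper's, since you spell out why $\lambda_{\min}=\|H_x^{-1}\|^{-1}$ and justify the supremum step in both directions, which the paper leaves implicit.
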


To calculate an underapproximation of the reachable set, we will need to measure the length of a curve on a manifold. We begin by introducing the notion of a \textbf{simple curve}: a map $\gamma:~[a,b]\to M$ is simple if it is an image of an injective, continuous map, i.e., a non-self-intersecting loop on a finite subset of its domain \citep{jost2008riemannian, sulovsk2012depth}. The simple curve along with Riemannian metric can be used to define the Riemannian length functional and distance.

\begin{definition}[Riemannian length functional]\label{RiemannianLength}
    Let $\gamma:~[a,b] \to M$ be a simple curve. We define its length as $$\mathcal{L}(\gamma) := \int_a^b\sqrt{h_{\gamma(t)}(\dot{\gamma}(t),\dot{\gamma}(t))}~dt = \int_a^b|\dot{\gamma}(t)|_{h_{\gamma(t)}}~dt.$$
\end{definition}

The length $\mathcal{L}(\gamma)$ of two curves tracing the same path is agnostic under time reparameterizations \citep{jost2008riemannian}. The shortest length induced by the Riemannian metric is the Riemannian distance. 

\begin{definition}[Riemannian distance]\label{RiemannianDistance}
    Denote $C(x,y)$ be the set of \textit{differentiable simple curves} $\gamma: [a,b] \to M$ with $\gamma(a)~=~x$ and $\gamma(b)~=~y$. We define the Riemannian distance function $d:~M\times~M~\to~[0,\infty)$ between two points $x,\,y\in M$ by $$d(x,y) = \inf_{\gamma\in C(x,y)} \mathcal{L}(\gamma).$$
\end{definition}

We can now use the notion of the Riemannian distance to define Lipschitz continuity on Riemannian manifolds. The corresponding Lipschitz constants can be used to determine maximal growth rate bounds for $f,\,g_l \in \Gamma(M)$. In subsequent sections, we use these maximal growth rate bounds on the dynamics of \eqref{nonlinearSystem} to determine underapproximations of the guaranteed set of reachable states.

\subsection{Lipschitz Continuity on Riemannian Manifold}

Recall that in $\mathbb{R}^n$, $f:\mathbb{R}^n \to \mathbb{R}^n$ is said to be Lipschitz continuous if there exists a constant $L > 0$ such that $$\|f(x) - f(y)\| \leq L\|x - y\|$$ for any $x,\,y \in \mathbb{R}^n$. However, since $T_xM \neq T_yM$, $f(x) - f(y)$ is undefined. We thus consider the Riemannian version of Lipschitz continuity found in Chapter II.A of \citep{canary2006fundamentals}. We first need to define a map between tangent spaces. Let $\vec{e}_i(x) \in T_xM$ be basis vectors known by Assumption \ref{Assumption 1}. Using the notion of an \textbf{affine connection} \citep[Section 9.1]{frankel2011geometry} and \textbf{covariant derivative} \citep{o2006elementary} along some curve $\gamma(t)$, we can write $f(x) = \sum_{j=1}^nf^j(x)\vec{e}_j(x)$ and apply the chain rule: $$
    \nabla_{\dot{\gamma}}f(x) = \sum_{i,j = 1}^n\dot{\gamma}^i\frac{\partial f^j(x)}{\partial x^i}\vec{e}_j + \sum_{j=1}^nf^j(x)\nabla_{\dot{\gamma}}\vec{e}_j.
$$Incorporating \textbf{connection coefficients} \citep{do1992riemannian,jost2008riemannian} --- also called Christoffel symbols --- defined by $\nabla_{\vec{e}_i}\vec{e}_j = \sum_{k=1}^n\Gamma_{ij}^k\vec{e}_k$ gives 

\begin{equation}\label{covariantDerivative_Definition}
    \nabla_{\dot{\gamma}}f(x) = \sum_{i,j,k = 1}^n\left(\dot{\gamma}^i\left(\frac{\partial f^k}{\partial x^i} + f^j\Gamma_{ij}^k\right)\right)\vec{e}_k.
\end{equation}

These coefficients quantify how the covariant derivative changes as the basis vectors of tangent planes on $M$ rotate along a path on a curved surface. We can define different connections by defining their corresponding connection coefficients $\Gamma_{ij}^k$. For the purposes of this paper, we will focus on two connections, namely the well-known Levi-Civita connection \citep{do1992riemannian,jost2008riemannian} and the flat connection. The Levi-Civita connection is commonly used in Riemannian geometry for its \textbf{torsion-free} and \textbf{metric-compatible} \citep{do1992riemannian,jost2008riemannian} properties. The corresponding connection coefficients are defined as 

\begin{equation}\label{connectionCoefficients}
    \Gamma_{ij}^k = \frac{1}{2}\sum_lH^{kl}\left(\frac{\partial H_{li}}{\partial x^j} + \frac{\partial H_{lj}}{\partial x^i} - \frac{\partial H_{ij}}{\partial x^l}\right).
\end{equation}

\noindent By Assumption \ref{Assumption 1}, the Riemannian metric $h$ is fully known and the knowledge of all coefficients follows from knowing $h$. In \eqref{connectionCoefficients}, $H_{kl}$ and $H^{kl}$ are components of $H_x$ and $H_x^{-1}$ respectively. With a connection formally defined, we refer the reader to \citep[Definition 4.1.2]{jost2008riemannian} for a definition of \textbf{parallel transport}.



From the definition above, the parallel transport along some curve $\gamma$ is dependent on the defined connection. As discussed previously, we will consider two connections, namely the Levi-Civita and flat connections. While the Levi-Civita connection coefficient from \eqref{connectionCoefficients} defines the parallel transport, the flat connection is the case where $\Gamma_{ij}^k = 0$ uniformly. Naturally, the \textbf{flat transport} is calculated using Definition 4.1.2 from \citep{jost2008riemannian} using $\Gamma_{ij}^k = 0$. From its corresponding metric tensor $H_x = I$, we see from \eqref{riemannianNorm} that $|\cdot|_{h_x} = \|\cdot\|$, so the flat transport is metric compatible with the Euclidean metric. We use the parallel transport to begin formally defining Riemannian Lepschitz continuity.

\begin{definition}[Classical Lipschitz Constant]\label{riemannianLipschitzDefinition}
    Let $V$ be a continuous vector field on $M$. Then $L$ is the \textit{classical Lipschitz constant} on $V$ if $$L = \sup_\gamma\frac{|\tau_\gamma V(\gamma(0)) - V(\gamma(1))|_{h_x}}{\mathcal{L}(\gamma)}$$ where $\gamma:[0,1]\to M$ varies over all $\mathcal{C}^1$-paths and $\tau_\gamma$ is shorthand for the parallel transport along the curve $\gamma$ from $\gamma(0)$ to $\gamma(1)$.
\end{definition}

By the Hopf-Rinow Theorem \citep[Theorem 1.7.1]{jost2008riemannian}, closed and bounded subsets of $M$ are compact. It follows from compactness that if $V$ is $\mathcal{C}^1$, then $L$ is finite \citep{canary2006fundamentals}. Another consequence of compactness is to find the Lipschitz constant in Definition \ref{riemannianLipschitzDefinition}, we need not vary over all $\mathcal{C}^1$-paths.

\begin{lemma}[Lemma II.A.2.4, \citep{canary2006fundamentals}]\label{Lemma 2}
    Let $U \subseteq M$ be compact and $V:U\to TM$. Then, the supremum which determines the Lipschitz constant is attained if we vary only over geodesic paths in $M$.
\end{lemma}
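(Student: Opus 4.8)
The plan is to sandwich the Lipschitz constant of Definition~\ref{riemannianLipschitzDefinition} between a single \emph{pointwise} quantity from both sides. Set
$$C := \sup\{\,|\nabla_w V(x)|_{h_x} : x \in U,\ w \in T_xM,\ |w|_{h_x} = 1\,\},$$
which is finite because $U$ is compact (Hopf--Rinow, as noted above) and $V$ is $\mathcal{C}^1$, so $(x,w)\mapsto \nabla_w V(x)$ is continuous on the compact unit tangent bundle over $U$. I would then prove two inequalities: the supremum over \emph{all} $\mathcal{C}^1$-paths is at most $C$, and the supremum over \emph{geodesics alone} is at least $C$. Since geodesics are $\mathcal{C}^1$-paths, the two suprema are squeezed to the common value $C$, which is exactly the assertion that restricting to geodesics does not change the constant.

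For the upper bound, I would fix an arbitrary $\mathcal{C}^1$-path $\gamma:[0,1]\to M$ from $x=\gamma(0)$ to $y=\gamma(1)$ and write the parallel-transport defect as an integral of $\nabla V$. Letting $P_t$ denote Levi-Civita parallel transport along $\gamma$ from $\gamma(t)$ to $y$ and setting $W(t):=P_tV(\gamma(t))$, the defining property of parallel transport gives $W'(t)=P_t\bigl(\nabla_{\dot\gamma(t)}V\bigr)$, so that integrating from $0$ to $1$ yields
$$\tau_\gamma V(\gamma(0)) - V(\gamma(1)) = -\int_0^1 P_t\bigl(\nabla_{\dot\gamma(t)}V\bigr)\,dt.$$
Because the norm induced by $h$ is invariant under Levi-Civita parallel transport (the fact established in the discussion following Definition~\ref{parallelTransportDefinition}), the integral triangle inequality gives
$$|\tau_\gamma V(\gamma(0)) - V(\gamma(1))|_{h_y} \le \int_0^1 |\nabla_{\dot\gamma(t)}V|_{h_{\gamma(t)}}\,dt \le C\int_0^1 |\dot\gamma(t)|_{h_{\gamma(t)}}\,dt = C\,\mathcal{L}(\gamma).$$
Dividing by $\mathcal{L}(\gamma)$ shows every ratio in Definition~\ref{riemannianLipschitzDefinition} is at most $C$, hence $L\le C$.

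For the lower bound, I would fix $x\in U$ and a unit vector $w\in T_xM$ and, for small $\epsilon>0$, take the unit-speed geodesic $\sigma_\epsilon:[0,\epsilon]\to M$ with $\sigma_\epsilon(0)=x$, $\dot\sigma_\epsilon(0)=w$, so $\mathcal{L}(\sigma_\epsilon)=\epsilon$. Pulling $V(\sigma_\epsilon(s))$ back to $T_xM$ by inverse parallel transport along $\sigma_\epsilon$ and using that $\nabla_wV(x)$ is by definition the initial derivative of this pulled-back curve, a first-order expansion gives $|\tau_{\sigma_\epsilon}V(x) - V(\sigma_\epsilon(\epsilon))|_{h_{\sigma_\epsilon(\epsilon)}} = \epsilon\,|\nabla_w V(x)|_{h_x} + O(\epsilon^2)$. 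Dividing by $\epsilon$ and letting $\epsilon\to 0$, the geodesic ratio tends to $|\nabla_wV(x)|_{h_x}$; taking the supremum over all admissible $(x,w)$ shows the geodesic supremum is at least $C$. Combined with $L\le C$, this closes the sandwich.

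The hard part is the numerator, not the denominator: parallel transport is path-dependent (holonomy), so one \emph{cannot} simply replace $\gamma$ by a length-minimizing geodesic between the same endpoints, since that would change the transported vector as well as the length. The integral representation of the defect in terms of $\nabla V$ is the key device, as it converts a global, path-dependent quantity into a pointwise bound; the matching lower bound must then be recovered only in the infinitesimal limit of vanishingly short geodesics, so the common value $C$ is \emph{approached} by shrinking geodesics rather than realized by a fixed one. A secondary point needing care is the finiteness of $C$, which rests entirely on the compactness of $U$.
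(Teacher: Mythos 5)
The first thing to note is that the paper does not prove this lemma at all: it is imported verbatim as Lemma II.A.2.4 of \citep{canary2006fundamentals} and used as a black box to argue that Assumption~\ref{Assumption 2} is reasonable. So there is no in-paper proof to compare against, and your argument must stand on its own. On its own terms it is sound in the $\mathcal{C}^1$ category, and it is the standard differential-geometric route: defining $C$ as the supremum of $|\nabla_w V(x)|_{h_x}$ over the unit sphere bundle of $U$, proving $L \le C$ via the integral representation of the transport defect (using that Levi-Civita transport is an isometry, as the paper notes after Definition~\ref{parallelTransportDefinition}, and linearity of $\nabla_w V$ in $w$), and recovering $C$ as a limit of difference quotients along shrinking geodesics. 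Your squeeze $C \le L_{\mathrm{geo}} \le L \le C$ is valid, and it yields the stronger identity $L = C$, i.e., the Lipschitz constant equals the sup-norm of the covariant differential --- more than the lemma asserts, and something the concatenation-based argument in the cited source does not give as directly.

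Two caveats keep this from matching the stated generality. First, the lemma (via Definition~\ref{riemannianLipschitzDefinition}) is phrased for a \emph{continuous} vector field $V:U\to TM$, whereas your entire construction requires $V \in \mathcal{C}^1$: without differentiability, $\nabla_w V$ and hence $C$ do not exist and both halves of the sandwich collapse. This is defensible in context --- the paper's fields $f,g_l \in \Gamma(M)$ are smooth and its surrounding discussion already invokes $\mathcal{C}^1$ for finiteness of $L$ --- but you should state it as a hypothesis; the continuous case needs a different idea (subdivide the path, use that the ratio of a concatenation is at most the maximum of the ratios of its pieces, and compare transport along a short arc with transport along the geodesic joining its endpoints via a holonomy estimate quadratic in the piece length). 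Second, in your lower bound the short geodesic $\sigma_\epsilon$ must remain in $U$ where $V$ is defined, which fails for $x \in \partial U$ with $w$ outward-pointing or tangent to the boundary; you need to take the supremum defining $C$ over interior points and argue by continuity of $\nabla V$ and compactness (assuming $U = \overline{\mathrm{int}\,U}$, e.g., the closed bounded sets the paper works with) that this interior supremum still equals $C$. A last cosmetic point: under $\mathcal{C}^1$ regularity the expansion error is $o(\epsilon)$ rather than $O(\epsilon^2)$, which is all the limit argument needs.
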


For a complete Riemannian manifold, given some appropriate neighborhood $U$, there is a guaranteed existence of some unique geodesic curve $\gamma:~[0,1] \to M$ connecting any two points $x,y \in M$ \citep{jost2008riemannian}. It follows that since we will carry out calculations on a domain to be defined later, the Lipschitz constants need only be calculated for a single geodesic path. Thus, Assumption~\ref{Assumption 2} is reasonable; if we carry out our calculations in some neighborhood of $x_0$, this implies there is no need to vary over infinite $\mathcal{C}^1$ paths to find $L_f$ and $L_{g_l}$. We can now express the Riemannian Lipschitz bounds introduced in Assumption \ref{Assumption 2} using the Riemannian metric $|\cdot|_{h_x}$ and Riemannian distance function $d(\cdot,\cdot)$. Let $U \subseteq M$ be compact and assume system \eqref{nonlinearSystem} is traveling along some path $\gamma:~[0,1]\to \Gamma(U)$ joining two points $p,q\in U$; if $$|\tau_p^q\Gamma(p) - \Gamma(q)|_{h_x} \leq Ld(p,q),$$ then any path $\gamma$ joining $p$ and $q$ is $L$-path Lipschitz \citep{canary2006fundamentals}.

We know Riemannian Lipschitz bounds on the unknown dynamics, namely we are given the right-hand sides of inequalities $$|\tau_{x_0}^xf(x_0) - f(x)|_{h_x} \leq L_fd(x_0,x)$$ and $$|\tau_{x_0}^xg_l(x_0) - g_l(x)|_{h_x} \leq L_{g_l}d(x_0,x)$$ for every $l \in [m]$. Given we work with a series of vectors transported along a Riemannian manifold, it would be helpful to characterize a bound on the set of vectors undergoing parallel transport simultaneously.

\begin{lemma}\label{Lemma_LGBound}
    Let $L_{g_l},\,g_l(x_0)$, $n$, and $H_x$ be as defined above. Define $\overline{L}_{g} = \max_l\{L_{g_l}\}$. If $L_G = n\|H_x^{-1}\|\|H_x\|^{\frac{1}{2}}\overline{L}_{g}$, then

    \begin{equation}\label{LGBound}
        |\tau_{x_0}^xG(x_0) - G(x)|_{h_x} \leq L_Gd(x_0,x).
    \end{equation}
\end{lemma}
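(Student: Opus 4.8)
The plan is to exploit the column structure of $\tau_{x_0}^x G(x_0) - G(x)$ and reduce the matrix estimate \eqref{LGBound} to the per-column bounds that are already assumed, namely $|\tau_{x_0}^x g_l(x_0) - g_l(x)|_{h_x} \le L_{g_l}\, d(x_0,x)$. By the notational convention for $\tau_p^q V$ applied to a matrix of stacked vector fields, parallel transport acts column-wise (it is a linear map between tangent spaces), so the $l$-th column of $D := \tau_{x_0}^x G(x_0) - G(x)$ is exactly $d_l := \tau_{x_0}^x g_l(x_0) - g_l(x)$. Each column therefore satisfies $|d_l|_{h_x} \le L_{g_l}\, d(x_0,x) \le \overline{L}_g\, d(x_0,x)$, and the whole task becomes one of passing from these column estimates to an operator-norm estimate via Lemma \ref{Lemma_normEquivalenceRelations}.

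First I would move from the Riemannian matrix norm to the Euclidean matrix norm using the right-hand inequality of \eqref{matrixNormEquivalency}, writing $|D|_{h_x} \le \|D\|\,(\|H_x\|\,\|H_x^{-1}\|)^{\frac{1}{2}}$. Next I would bound the Euclidean operator norm of $D$ by the sum of its column norms, $\|D\| \le \sum_l \|d_l\|$, which follows from subadditivity of the operator norm applied to the decomposition $D = \sum_l d_l e_l^T$ together with $\|d_l e_l^T\| = \|d_l\|$. Each Euclidean column norm is then converted back into the Riemannian norm through the left-hand inequality of \eqref{vectorNormEquivalency} rearranged as $\|d_l\| \le \|H_x^{-1}\|^{\frac{1}{2}} |d_l|_{h_x}$, giving $\|d_l\| \le \|H_x^{-1}\|^{\frac{1}{2}}\,\overline{L}_g\, d(x_0,x)$.

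Assembling these steps, the sum over the columns produces the linear dimensional factor appearing as $n$ in $L_G$ (the count of columns, which coincides with $n$ in the square setting in which the Riemannian matrix norm \eqref{riemannianMatrixNorm} is defined), while the metric factors telescope: $\|H_x^{-1}\|^{\frac{1}{2}}\cdot(\|H_x\|\,\|H_x^{-1}\|)^{\frac{1}{2}} = \|H_x^{-1}\|\,\|H_x\|^{\frac{1}{2}}$. This yields $|D|_{h_x} \le n\,\|H_x^{-1}\|\,\|H_x\|^{\frac{1}{2}}\,\overline{L}_g\, d(x_0,x) = L_G\, d(x_0,x)$, which is \eqref{LGBound}.

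I do not anticipate a genuine obstacle; the substance lies entirely in orienting the two inequalities of Lemma \ref{Lemma_normEquivalenceRelations} correctly — the vector equivalence is used to \emph{deflate} the columns into the Euclidean norm, and the matrix equivalence to \emph{reinflate} the aggregated estimate into the Riemannian norm. The only points demanding care are justifying that parallel transport distributes over the columns of $G$, and checking that the operator-norm-by-column-sum step reproduces \emph{exactly} the stated constant (a Frobenius-type bound would instead give a $\sqrt{n}$ factor), while keeping careful track of which norm lives on the control space $\mathbb{R}^m$ and which on the tangent space $T_xM$.
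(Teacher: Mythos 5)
Your argument is correct wherever the paper's own argument is correct, but it takes a genuinely different route, so a comparison is worthwhile. Both proofs reduce the claim to the per-column bounds $|\tau_{x_0}^xg_l(x_0)-g_l(x)|_{h_x}\le \overline{L}_g\,d(x_0,x)$ and to Lemma \ref{Lemma_normEquivalenceRelations}, but the paper detours through the induced matrix $1$-norm: it deflates one column's Riemannian bound to a Euclidean one, passes to the vector $1$-norm (a factor $\sqrt{n}$), identifies that with $\|\tau_{x_0}^xG(x_0)-G(x)\|_1$ via the max-column-sum characterization, converts to the Euclidean operator norm (a second factor of $\sqrt{n}$), and then reinflates to $|\cdot|_{h_x}$ by \eqref{matrixNormEquivalency}, the two $\sqrt{n}$'s multiplying to the stated $n$. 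You instead write $D=\tau_{x_0}^xG(x_0)-G(x)=\sum_l d_l e_l^T$ with columns $d_l$ and use $\|D\|\le\sum_l\|d_l\|$; this is cleaner, since the paper's identification step is actually slightly off as written (it fixes $l$ as the index maximizing $L_{g_l}$, whereas the matrix $1$-norm equals the $1$-norm of the column with largest column sum, which need not be that index; one should take the maximizing column and bound its constant by $\overline{L}_g$), and your decomposition makes the origin of the dimensional factor transparent. The one discrepancy to flag is that your factor is the column count $m$, not the manifold dimension $n$: the two coincide when $m=n$, and for $m<n$ your bound is strictly sharper and hence still implies \eqref{LGBound} (a sharper constant suffices, since the lemma only asserts an upper bound — your worry about matching the constant \emph{exactly}, and about Frobenius, is therefore unnecessary), but for $m>n$ your chain yields $m\|H_x^{-1}\|\|H_x\|^{\frac{1}{2}}\overline{L}_g$, which exceeds the stated $L_G$. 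The paper's proof has precisely the same defect — its step $\|D\|\le\sqrt{n}\|D\|_1$ is only valid in general with $\sqrt{m}$ in place of $\sqrt{n}$ — so neither proof covers $m>n$ as written; you at least surface this caveat explicitly rather than hiding it in a norm-equivalence constant.
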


The lemma above leverages norm equivalence relations to get a Riemannian Lipschitz bound $L_G$ on the matrix-valued function $G(x_0)$ composed of columns of vectors $g_l(x_0) \in T_{x_0}M$ for any $x_0 \in M$. From Lemma \ref{Lemma_LGBound} and Assumption \ref{Assumption 2}, we know $L_G$ and thus also know the right-hand side of the inequality $$|\tau_{x_0}^xG(x_0) - G(x)|_{h_x} \leq L_Gd(x_0,x).$$

Let us now denote a set $\Gamma_{L_f}(M)$ as the set of all $\hat{f} \in \Gamma(M)$ such that $|\tau_p^q\hat{f}(p) - \hat{f}(q)|_{h_x} \leq L_fd(p,q)$ for all $p,\,q \in U \subset M$. Next, we denote $\mathcal{D}_{con} \subseteq \Gamma_{L_f}(M) \times \Gamma_{L_G}(M)$ as the set of all Lipschitz $\hat{f},\hat{g}_l \in \Gamma(M)$ that are consistent with Assumptions \ref{Assumption 1} and \ref{Assumption 2}. We now utilize the set $\mathcal{D}_{con}$ to define the \textit{guaranteed velocity set} (GVS).

\subsection{Guaranteed Velocity Set}

We begin by following the approach of interpreting ordinary differential equations with control inputs as inclusions \citep{aubin2012differential,smirnov2002introduction} as in previous work \citep{shafa2022reachability, shafa2022maximal}. We will later show how to integrate the sets on the right-hand side of the inclusion over a manifold to arrive at a set of guaranteed reachable states. 

We adopt the same definitions for the \textit{available velocity set} and \textit{guaranteed velocity set} as in \citep{shafa2022reachability} with the only difference being how we defined $\mathcal{D}_{con}$ for systems operating on Riemannian manifolds in the previous section. We define the \textit{available velocity set} of system \eqref{nonlinearSystem} at state $x$ by $\mathcal{V}_x = f(x) + G(x)\mathcal{U}$ and introduce the following ODI: 

\begin{equation}\label{availableVelocitySet}
    \dot{x} \in \mathcal{V}_x = f(x) + G(x)\mathcal{U},\quad x(0) = x_0.
\end{equation}

\noindent We now define the \textit{guaranteed velocity set} (GVS) below.

\begin{definition}[Guaranteed velocity set]\label{guaranteedVelocitySetDefinition}
    Let $(\hat{f},\hat{G}) \in \mathcal{D}_{con}$ be consistent with knowledge from Assumption \ref{Assumption 2}. Then, the \textit{guaranteed velocity set} is 
\begin{equation}\label{guaranteedVelocitySet}
    \mathcal{V}^\mathcal{G}_x = \bigcap_{(\hat{f},\hat{G}) \hspace{1 mm} \in \hspace{1 mm} \mathcal{D}_{con}}
    \hat{f}(x) + \hat{G}(x)\mathcal{U} \subseteq \mathcal{V}_x.
\end{equation}
\end{definition}

The GVS is the set of all velocities that can be taken by all systems given Assumption \ref{Assumption 2}. By Assumption \ref{Assumption 2}, $\mathcal{V}^\mathcal{G}_{x_0}$ is known; using $\mathcal{V}_{x_0}$ and $\mathcal{D}_{con}$ we will develop guaranteed underapproximations for $\mathcal{V}^\mathcal{G}_x$.

Clearly, if a trajectory satisfies the inclusion $\dot{x} \in \mathcal{V}^\mathcal{G}_x$, it satisfies \eqref{availableVelocitySet} and it serves as a solution to the control system \eqref{nonlinearSystem} for an admissible control input. Using this fact, we calculate a \textit{guaranteed reachable set} from an underapproximation of $\mathcal{V}^\mathcal{G}_x$. 

\subsection{Guaranteed Reachable Set}

We ultimately want to underapproximate the set of reachable states using solely the knowledge of $\mathcal{D}_{con}$. We first define the \textit{(forward) reachable set}.

\begin{definition}[Forward reachable set]\label{forwardReachableSetDefinition}
    Let $\phi^{\hat{f},\hat{G}}_u(\cdot; x_0)$ denote the controlled trajectory of $\mathcal{M}(\hat{f},\hat{G})$ with control signal $u$ where $\phi^{\hat{f},\hat{G}}_u(0;x_0) = x_0$. Then, the \textit{(forward) reachable set}~is $$\mathcal{R}^{\hat{f},\hat{G}}(T, x_0) = \{\phi^{\hat{f},\hat{G}}_u(t; x_0) \hspace{1 mm} | \hspace{1 mm} u : [0,T] \to \mathcal{U}, t\in [0,T]\}.$$
\end{definition}

Using this definition, we can easily define the \textit{guaranteed reachable set} (GRS), which is simply the intersection of all forward reachable sets for all $(\hat{f},\hat{G}) \in \mathcal{D}_{con}$ up to some time~$T$.

\begin{definition}[Guaranteed reachable set]\label{guaranteedReachableSetDefinition}
    Let $T \geq 0$. We describe the \textit{guaranteed reachable set} as:

\begin{equation}\label{guaranteedReachableSet}
    \mathcal{R}^\mathcal{G}(T,x_0) = \bigcap_{(\hat{f},\hat{G}) \in \mathcal{D}_{con}}\mathcal{R}^{\hat{f},\hat{G}}(T,x_0).
\end{equation}
\end{definition}

With the limited knowledge from our assumptions, it is not possible to calculate this set in real time for general cases. This motivates the following problem statement.

\begin{problemStatement}
    Determine or underapproximate the~GRS.
\end{problemStatement}

The GRS can be calculated using the GVS \citep{shafa2022reachability,shafa2022maximal}. Recall that $\mathcal{V}^\mathcal{G}_x$ is the set of all velocities that can be taken by all systems with the assumed knowledge of $\mathcal{D}_{con}$. Let us consider the following ODI:

\begin{equation}\label{ODI}
    \dot{x} \in \mathcal{V}^\mathcal{G}_x,\quad x(0) = x_0.
\end{equation}

\noindent If $\mathcal{V}^\mathcal{G}_{\phi(T;x_0)} = \emptyset$, we will consider by convention that the trajectory of \eqref{ODI} ceases to exist at time $T$. The following proposition then holds directly from \eqref{guaranteedVelocitySet} and \eqref{guaranteedReachableSet} and was directly shown in \citep{ornik2020guaranteed}, just not on $M$.

\begin{proposition}\label{Proposition1}
    Let $T \geq 0$. If a trajectory $\phi:~[0,+\infty) \to M$ satisfies \eqref{ODI} at all times $t \leq T$, then $\phi(T) \in \mathcal{R}^\mathcal{G}(T,x_0)$.
\end{proposition}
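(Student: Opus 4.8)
The plan is to unwind the two nested intersections --- the one defining the GVS in \eqref{guaranteedVelocitySet} and the one defining the GRS in \eqref{guaranteedReachableSet} --- and reduce the claim to a single per-system realizability statement. Fix an arbitrary pair $(\hat{f},\hat{G}) \in \mathcal{D}_{con}$. Since $\mathcal{V}^{\mathcal{G}}_x$ is by \eqref{guaranteedVelocitySet} an intersection over all admissible pairs, it is contained in the particular set $\hat{f}(x) + \hat{G}(x)\mathcal{U}$. Therefore, if $\phi$ satisfies \eqref{ODI} for all $t \leq T$, then in particular $\dot{\phi}(t) \in \hat{f}(\phi(t)) + \hat{G}(\phi(t))\mathcal{U}$ for (almost) every $t \leq T$, so $\phi$ is a solution of the control-affine differential inclusion associated with $\mathcal{M}(\hat{f},\hat{G})$ emanating from the initial condition $\phi(0) = x_0$ shared by \eqref{ODI} and the reachable set.

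Next I would promote this inclusion solution to an honest controlled trajectory of $\mathcal{M}(\hat{f},\hat{G})$. For each $t$, the relation $\dot{\phi}(t) - \hat{f}(\phi(t)) \in \hat{G}(\phi(t))\mathcal{U}$ guarantees the existence of some $u(t) \in \mathcal{U} = \mathbb{B}^m(0;1)$ with $\dot{\phi}(t) = \hat{f}(\phi(t)) + \hat{G}(\phi(t))u(t)$; the content is to choose the $u(t)$ so that $u$ is a measurable (hence admissible) control signal. This is a measurable-selection / Filippov-lemma argument: working in a local chart that identifies $T_xM$ with $\mathbb{R}^n$, the set-valued map $t \mapsto \{u \in \mathcal{U} : \hat{G}(\phi(t))u = \dot{\phi}(t) - \hat{f}(\phi(t))\}$ has nonempty closed values and is measurable, so it admits a measurable selection $u:[0,T]\to\mathcal{U}$. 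With this $u$, the given $\phi$ coincides with $\phi^{\hat{f},\hat{G}}_u(\cdot;x_0)$, and evaluating at $t = T$ in Definition \ref{forwardReachableSetDefinition} gives $\phi(T) \in \mathcal{R}^{\hat{f},\hat{G}}(T,x_0)$. Since $(\hat{f},\hat{G}) \in \mathcal{D}_{con}$ was arbitrary, this membership holds for every admissible pair simultaneously, so $\phi(T)$ lies in the intersection \eqref{guaranteedReachableSet}, i.e. $\phi(T) \in \mathcal{R}^{\mathcal{G}}(T,x_0)$, which is the assertion.

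I expect the measurable-selection step to be the only genuine obstacle; everything else is bookkeeping with the definitions, and indeed the Euclidean version of this equivalence is exactly what is credited to \citep{ornik2020guaranteed}. Two points deserve care. First, $\phi$ must be absolutely continuous for $\dot{\phi}$ to exist almost everywhere and for the selected control to reconstruct $\phi$ by integration; this is built into the meaning of being a solution of \eqref{ODI}, so it may be assumed rather than proved. Second, the selection is intrinsically local --- it lives in a single chart and exploits the Euclidean structure of $\mathcal{U} \subset \mathbb{R}^m$ --- so on a manifold I would use compactness of $\phi([0,T])$ to cover it by finitely many subintervals whose images lie in single charts, apply the classical result on each, and patch the resulting measurable controls. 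Because the control-affine form and the compact convex input set $\mathcal{U}$ transfer verbatim to each chart, the argument is coordinate-independent and the manifold setting introduces no essential new difficulty beyond this patching.
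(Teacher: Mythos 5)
Your proposal is correct and takes essentially the approach the paper itself relies on: the paper gives no explicit proof, asserting only that the result ``holds directly'' from \eqref{guaranteedVelocitySet} and \eqref{guaranteedReachableSet} and citing \citep{ornik2020guaranteed} for the Euclidean case, and that argument is exactly your unwinding of the two intersections --- fix an arbitrary $(\hat{f},\hat{G}) \in \mathcal{D}_{con}$, observe $\mathcal{V}^{\mathcal{G}}_x \subseteq \hat{f}(x)+\hat{G}(x)\mathcal{U}$, recover an admissible control by a Filippov/measurable-selection step, and intersect over all pairs. Your added care about absolute continuity and about patching the selection across finitely many charts (using compactness of $\phi([0,T])$) is precisely the bookkeeping needed to lift the cited Euclidean argument to $M$, and you handle it correctly.
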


Proposition \ref{Proposition1} implies that the reachable set of \eqref{ODI} is a subset of $\mathcal{R}^\mathcal{G}(T,x_0)$. In the next section, we formulate an underapproximation of $\mathcal{V}^\mathcal{G}_x$. We then formulate a control system whose velocities are contained within these underapproximations in Section IV. The union of all trajectories of this control system form the reachable set.
\section{Guaranteed Velocity Set}

In this section, we determine a guaranteed set of attainable velocities the unknown system can reach using knowledge consistent with our assumptions. We begin by determining the domain on $x$ under which Assumption \ref{Assumption 1} remains satisfied, namely that $\mathrm{Im}(\tilde{\tau}_{x_0}^xG(x_0))~=~\mathrm{Im}(G(x))$ using predefined coordinate bases and the flat transport. It is clear from Assumption~\ref{Assumption 1} that $\mathrm{Im}(\tilde{\tau}_{x_0}^xG(x_0))~=~\mathrm{Im}(R)$. We want to quantify some nonzero neighborhood around $x_0$ where $\mathrm{Im}(G(x)) = \mathrm{Im}(R)$ as well.

\begin{lemma}\label{Lemma_Domain}
    Set $g_l^\Gamma = \sum_{i,j,k}\dot{\gamma}^i\Gamma_{ij}^kg_l^j(x_0)\Vec{e}_k$ for $l \in [m]$ where vectors $\Vec{e}_k$ form the basis for $T_xM$. Under Assumptions \ref{Assumption 1} and \ref{Assumption 2}, if 
    \vskip -5pt
    \small

    \begin{equation*}
        d(x_0,x) < \frac{\|\tilde{\tau}_{x_0}^xG^\dagger(x_0)\|^{-1} - \|H^{-1}_x\|^{\frac{1}{2}}\|H_x\|\left\|\begin{bmatrix}g_1^\Gamma & \hdots & g_m^\Gamma\end{bmatrix}\right\|}{(\|H_x^{-1}\|\|H_x\|)^{\frac{1}{2}}L_G},
    \end{equation*}

    \normalsize
    \noindent then $\mathrm{Im}(\tilde{\tau}_{x_0}^xG(x_0)) = \mathrm{Im}(G(x))$.
\end{lemma}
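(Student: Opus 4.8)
The plan is to recast the statement as a matrix perturbation argument: the equality $\mathrm{Im}(\tilde{\tau}_{x_0}^x G(x_0)) = \mathrm{Im}(G(x))$ will follow once I show that $G(x)$ is a small enough perturbation of $\tilde{\tau}_{x_0}^x G(x_0)$ to preserve its rank, and that the column space of $G(x)$ cannot escape $\mathrm{Im}(R)$. First I would record the ``free'' inclusion. By Assumption \ref{Assumption 1}, $G(x) = RH(x)$ with $R$ constant with respect to the flat connection, so every column of $G(x)$ lies in the span of the flat-transported columns of $R$; since $G(x_0)=RH(x_0)$ with $H(x_0)$ invertible we have $\mathrm{Im}(\tilde{\tau}_{x_0}^x G(x_0)) = \mathrm{Im}(R)$, and therefore $\mathrm{Im}(G(x)) \subseteq \mathrm{Im}(\tilde{\tau}_{x_0}^x G(x_0))$. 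It thus suffices to prove equality of ranks.

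Next I would set up the perturbation principle. Writing $A = \tilde{\tau}_{x_0}^x G(x_0)$ and $B = G(x)$, the smallest nonzero singular value of $A$ equals $\|A^\dagger\|^{-1} = \|\tilde{\tau}_{x_0}^x G^\dagger(x_0)\|^{-1}$. By Weyl's inequality for singular values, $\sigma_r(B) \geq \sigma_r(A) - \|A-B\|$ with $r = \mathrm{rank}(A)$; hence if $\|A-B\| < \|\tilde{\tau}_{x_0}^x G^\dagger(x_0)\|^{-1}$ then $\sigma_r(B) > 0$ and $\mathrm{rank}(B) \geq r$. Combined with $\mathrm{Im}(B) \subseteq \mathrm{Im}(A)$ from the first step, which forces $\mathrm{rank}(B) \leq r$, equality of ranks together with the inclusion gives $\mathrm{Im}(G(x)) = \mathrm{Im}(\tilde{\tau}_{x_0}^x G(x_0))$. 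The entire lemma then reduces to the single Euclidean estimate $\|A-B\| < \|\tilde{\tau}_{x_0}^x G^\dagger(x_0)\|^{-1}$.

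To establish that estimate I would split through the Levi-Civita transport, $A - B = \big(\tilde{\tau}_{x_0}^x G(x_0) - \tau_{x_0}^x G(x_0)\big) + \big(\tau_{x_0}^x G(x_0) - G(x)\big)$. The second bracket is governed by Lemma \ref{Lemma_LGBound}, $|\tau_{x_0}^x G(x_0) - G(x)|_{h_x} \leq L_G d(x_0,x)$, which I convert to the Euclidean norm via the lower bound in \eqref{matrixNormEquivalency}, producing the factor $(\|H_x\|\|H_x^{-1}\|)^{1/2} L_G d(x_0,x)$. For the first bracket I would exploit that the flat transport holds coordinate components fixed while the Levi-Civita transport does not: by the covariant-derivative formula \eqref{covariantDerivative_Definition}, the failure of the flat-transported field to be Levi-Civita parallel is exactly the Christoffel correction with columns $g_l^\Gamma$, so the difference solves an inhomogeneous parallel-transport equation with that source; integrating it and using norm invariance of $\tau$ bounds the difference by the matrix with columns $g_l^\Gamma$, and converting norms supplies the factor $\|H_x^{-1}\|^{1/2}\|H_x\|$. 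Summing the two brackets and requiring the total to be below $\|\tilde{\tau}_{x_0}^x G^\dagger(x_0)\|^{-1}$ rearranges precisely into the stated bound on $d(x_0,x)$.

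The main obstacle is the first bracket. Passing rigorously from ``flat versus Levi-Civita transport of $G(x_0)$'' to the clean quantity $\|H_x^{-1}\|^{1/2}\|H_x\|\,\big\|\,[\,g_1^\Gamma\ \cdots\ g_m^\Gamma\,]\,\big\|$ requires care: the term $g_l^\Gamma$ depends on $\dot{\gamma}$ and on the point along $\gamma$ at which the Christoffel symbols are evaluated, so I must fix the geodesic parameterization (arc length, so that $|\dot{\gamma}|_{h}=1$), argue that the integral of the correction over the short curve is dominated by the stated term, and then track every norm-equivalence constant so the metric factors land exactly. A secondary point worth stating explicitly is that rank preservation alone would \emph{not} give equality of images for a generic perturbation; it is precisely the inclusion $\mathrm{Im}(G(x)) \subseteq \mathrm{Im}(R)$ coming from Assumption \ref{Assumption 1} that upgrades ``same rank'' to ``same image.''
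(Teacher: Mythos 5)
Your proposal follows essentially the same route as the paper's own proof: the splitting of $\tilde{\tau}_{x_0}^x G(x_0) - G(x)$ through the Levi-Civita transport, the bound on the Lipschitz piece via Lemma \ref{Lemma_LGBound} and the norm-equivalence relations, the Christoffel-correction bound with columns $g_l^\Gamma$ for the transport-discrepancy piece, Weyl's inequality to preserve the smallest nonzero singular value, and the inclusion $\mathrm{Im}(G(x)) \subseteq \mathrm{Im}(R)$ from Assumption \ref{Assumption 1} to upgrade rank equality to image equality. Your version is, if anything, organized more cleanly (stating the image inclusion up front and reducing everything to one Euclidean estimate), and your closing caveat about the curve-dependence of $g_l^\Gamma$ flags a point the paper itself treats only informally.
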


We now compare this result to the case of control systems constrained in flat space.

\begin{remark}\label{Remark_Domain}
    Let $M$ be a flat manifold. Then, the domain which satisfies the condition that $\mathrm{Im}(\tilde{\tau}_{x_0}^xG(x_0)) = \mathrm{Im}(G(x))$ is identical to the domain in Lemma 1 of \citep{shafa2022reachability}.
\end{remark}

The remark above is a consequence of $H_x = I$ and $\Gamma_{ij}^k~=~0$. We emphasize that the remark above holds because the Euclidean space is itself a flat manifold; it also shows that Lemma~1 from \citep{shafa2022reachability} is generalizable to any flat manifold. Now that we have determined the domain under which we will calculate guaranteed velocities, it is time to quantify an underapproximation of the \textit{guaranteed velocity set} (GVS).


\begin{theorem}\label{Theorem 1}
    Let $\hat{f}(x_0),\,\hat{G}(x_0), L_f$, $L_{G}$, $H_x$, $\Gamma_{ij}^k$, and $g_l^\Gamma$ for $l \in [m]$ be defined as above. Let $\gamma~:~[0,1]~\to~M$ define a geodesic curve from $x_0$ to $x$. Let $\Tilde{\tau}$ define the parallel transport using the flat connection. Set $a(x) = (\|H_x^{-1}\|\|H_x\|)^{\frac{1}{2}}\|H_x\|^{\frac{1}{2}}\left\|\begin{bmatrix}\hat{g}_1^\Gamma & \hdots & \hat{g}_m^\Gamma\end{bmatrix}\right\|$, $b(x) = (\|H_x^{-1}\|\|H_x\|)^{\frac{1}{2}}\left\|\sum_{i,j,k}\dot{\gamma}^i\Gamma_{ij}^k\hat{f}^j(x_0)\Vec{e}_k\right\|$, $c(x) = (\|H_x^{-1}\|\|H_x\|)^{\frac{1}{2}}\left(L_G + \|H_x\|^{-\frac{1}{2}}L_f\right)$ and $$\overline{d}(x_0,x) = \frac{\|\tilde{\tau}_{x_0}^x\hat{G}^\dagger(x_0)\|^{-1} - a(x) - b(x)}{c(x)}.$$ If $d(x_0,x) \leq \overline{d}(x_0,x)$,

\begin{equation}\label{Theorem1_Equation}
\overline{\mathcal{V}}^\mathcal{G}_x = \mathbb{B}^n\left(\Tilde{\tau}_{x_0}^x\hat{f}(x_0);\alpha(x_0,x)\right) \cap \mathrm{Im}(\tilde{\tau}_{x_0}^x\hat{G}(x_0))
\end{equation}

\noindent where $\overline{\mathcal{V}}_x^\mathcal{G} \subset T_xM$, and

\vskip -10pt

\begin{equation}\label{Thm1_UpperBound}
\begin{gathered}
    \alpha(x_0,x) = \|\Tilde{\tau}^x_{x_0}\hat{G}^\dagger(x_0)\|^{-1} - \\ (\|H_x^{-1}\|\|H_x\|)^{\frac{1}{2}}\left(\|H_x\|^{\frac{1}{2}}\left\|\begin{bmatrix}\hat{g}_1^\Gamma & \hdots & \hat{g}_m^\Gamma\end{bmatrix}\right\|\right. + \\
    \left.\parenthnewln{} \left\|\sum_{i,j,k}\dot{\gamma}^i\Gamma_{ij}^k\hat{f}^j(x_0)\Vec{e}_k\right\| + \left(L_G + \|H_x\|^{-\frac{1}{2}}L_f\right)d(x_0,x)\right),
\end{gathered}
\end{equation}

\noindent then $\overline{\mathcal{V}}^\mathcal{G}_x \subseteq \mathcal{V}^\mathcal{G}_x$.
\end{theorem}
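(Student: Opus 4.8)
The plan is to prove the inclusion by establishing the stronger pointwise statement that $\overline{\mathcal{V}}^\mathcal{G}_x \subseteq \hat{f}(x) + \hat{G}(x)\mathcal{U}$ for \emph{every} consistent pair $(\hat{f},\hat{G}) \in \mathcal{D}_{con}$; intersecting over all such pairs and invoking Definition \ref{guaranteedVelocitySetDefinition} then yields $\overline{\mathcal{V}}^\mathcal{G}_x \subseteq \mathcal{V}^\mathcal{G}_x$. So I would fix an arbitrary $(\hat{f},\hat{G}) \in \mathcal{D}_{con}$ and an arbitrary $v \in \overline{\mathcal{V}}^\mathcal{G}_x$; by \eqref{Theorem1_Equation} this means $v \in \mathrm{Im}(\tilde{\tau}_{x_0}^x G(x_0))$ and $\|v - \tilde{\tau}_{x_0}^x f(x_0)\| \le \alpha(x_0,x)$. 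The goal is to exhibit an admissible control $u \in \mathbb{B}^m(0;1)$ with $v = \hat{f}(x) + \hat{G}(x)u$, since this certifies $v \in \hat{f}(x)+\hat{G}(x)\mathcal{U}$.

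First I would use the hypothesis $d(x_0,x) \le \overline{d}(x_0,x)$. The numerator of $\overline{d}$ is exactly $\|\tilde{\tau}_{x_0}^x G^\dagger(x_0)\|^{-1} - a(x) - b(x)$, and $\alpha(x_0,x)$ equals this same quantity minus $c(x)$; hence the bound $d \le \overline{d}$ is precisely what forces $\alpha(x_0,x) \ge 0$ and, a fortiori, places $x$ within the domain of Lemma \ref{Lemma_Domain}. Consequently $\mathrm{Im}(\tilde{\tau}_{x_0}^x G(x_0)) = \mathrm{Im}(\hat{G}(x)) = \mathrm{Im}(R)$. Because Assumption \ref{Assumption 1} gives $\hat{f}(x) \in \mathrm{Im}(R)$ and we assumed $v \in \mathrm{Im}(\tilde{\tau}_{x_0}^x G(x_0)) = \mathrm{Im}(R)$, the residual $v - \hat{f}(x)$ lies in $\mathrm{Im}(\hat{G}(x))$, so $u := \hat{G}(x)^\dagger\big(v - \hat{f}(x)\big)$ solves $\hat{G}(x)u = v - \hat{f}(x)$; it then remains only to verify $\|u\| \le 1$.

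The verification splits into numerator and denominator estimates for $\|u\| \le \|\hat{G}(x)^\dagger\|\,\|v - \hat{f}(x)\|$. For the numerator I would bound $\|v - \hat{f}(x)\| \le \|v - \tilde{\tau}_{x_0}^x f(x_0)\| + \|\tilde{\tau}_{x_0}^x f(x_0) - \hat{f}(x)\|$; the first summand is at most $\alpha(x_0,x)$, and the second I would control by reproducing, now for the drift $f$ rather than $G$, the two-step argument of Lemma \ref{Lemma_Domain}: split through $\tau_{x_0}^x f(x_0)$, bound $\|\tau_{x_0}^x f(x_0) - \hat{f}(x)\|$ by the Riemannian $L_f$-bound converted via \eqref{vectorNormEquivalency} into $\|H_x^{-1}\|^{1/2}L_f d(x_0,x)$, and bound the connection-mismatch term $\|\tilde{\tau}_{x_0}^x f(x_0) - \tau_{x_0}^x f(x_0)\|$ using the analogue of \eqref{Lemma_Domain_eq2}--\eqref{Lemma_Domain_eq3} together with Lemma \ref{Lemma_normEquivalenceRelations} by $(\|H_x^{-1}\|\|H_x\|)^{1/2}\|\sum_{i,j,k}\dot{\gamma}^i\Gamma_{ij}^k f^j(x_0)\vec{e}_k\| = b(x)$. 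For the denominator, $\|\hat{G}(x)^\dagger\| = \sigma_{\min}(\hat{G}(x))^{-1}$, and I would lower-bound $\sigma_{\min}(\hat{G}(x))$ exactly as in Lemma \ref{Lemma_Domain}: Weyl's inequality together with \eqref{Lemma_Domain_eq5} yields $\sigma_{\min}(\hat{G}(x)) \ge \|\tilde{\tau}_{x_0}^x G^\dagger(x_0)\|^{-1} - a(x) - (\|H_x^{-1}\|\|H_x\|)^{1/2}L_G d(x_0,x)$.

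Finally I would substitute the explicit form $\alpha(x_0,x) = \|\tilde{\tau}_{x_0}^x G^\dagger(x_0)\|^{-1} - a(x) - b(x) - c(x)$ into the numerator and compare with the denominator bound: the $b(x)$ contributions cancel, and the surviving $L_f$- and $L_G$-terms are precisely the two summands composing $c(x)$, so the ratio collapses to at most $1$, i.e.\ $\|u\| \le 1$. I expect the principal obstacle to be this bookkeeping of Euclidean-versus-Riemannian norm conversions alongside the parallel reproduction of the Lemma \ref{Lemma_Domain} machinery for $f$: one must check that the radius $\alpha$ was defined so that the numerator budget matches the smallest-singular-value denominator term by term, and that the geodesic reparametrization makes $a(x)$, $b(x)$, and $c(x)$ scale with $d(x_0,x)$ consistently, so that the domain condition $d \le \overline{d}$ and the nonnegativity $\alpha \ge 0$ coincide. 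Everything else reduces to routine applications of the already-established Lemmas \ref{Lemma_normEquivalenceRelations}, \ref{Lemma_LGBound}, and \ref{Lemma_Domain}.
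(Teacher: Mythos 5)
Your proposal is correct and follows essentially the same route as the paper's own proof: establish $\mathrm{Im}(\tilde{\tau}_{x_0}^x G(x_0)) = \mathrm{Im}(\hat{G}(x))$ via Lemma \ref{Lemma_Domain}, solve for the control through the pseudoinverse, lower-bound $\|\hat{G}^\dagger(x)\|^{-1}$ by Weyl's inequality, split $\|\tilde{\tau}_{x_0}^x f(x_0) - \hat{f}(x)\|$ through $\tau_{x_0}^x f(x_0)$ into a Christoffel-mismatch term and a Lipschitz term, and note that $\alpha(x_0,x)$ is defined exactly so that the numerator and denominator bounds make the ratio at most $1$. The only cosmetic difference is that the paper parametrizes ball elements as $kd + \tilde{\tau}_{x_0}^x f(x_0)$ with $\|d\|=1$, $|k| \le \alpha(x_0,x)$ and treats the cases $m<n$, $m=n$, $m>n$ separately, whereas your single construction $u = \hat{G}(x)^\dagger\bigl(v - \hat{f}(x)\bigr)$ handles all cases uniformly.
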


Note that through an abuse of notation, $\mathbb{B}^n$ in \eqref{Theorem1_Equation} exists in $T_xM$ but is defined in $\mathbb{R}^n$. However, $T_xM$ is diffeomorphic to $\mathbb{R}^n$, so the same discussion below Assumption \ref{Assumption 1} regarding $\mathrm{Im}(\cdot)$ applies here as well. Also, the domain under which we can calculate $\overline{\mathcal{V}}^\mathcal{G}_x$ is contained within the domain derived in Lemma \ref{Lemma_Domain} where $\mathrm{Im}(\tilde{\tau}_{x_0}^xG(x_0)) = \mathrm{Im}(G(x))$, i.e., where assumptions \ref{Assumption 1} and \ref{Assumption 2} hold. In \citep{shafa2022reachability}, we derived a geometrically similar ball underapproximation for systems constrained on Euclidean manifolds. For comparison, we provide this quick remark.

\begin{remark}\label{Remark 1}
    Let $M$ be a flat manifold. Then $\overline{\mathcal{V}}_x^\mathcal{G}$ is equal to the set calculated in Theorem 1 of \citep{shafa2022reachability}.
\end{remark}

Remark~\ref{Remark 1} states that the underapproximation from Theorem 1 of \citep{shafa2022reachability} should hold for a nonlinear control-affine system operating on any flat manifold. Much like Remark~\ref{Remark_Domain}, Remark~\ref{Remark 1} follows because $H_x = I$ and $\Gamma_{ij}^k = 0$ for flat manifolds. Now that we have guaranteed underapproximations of the GVS, we describe in the next section how to use such sets to determine an underapproximation of the GRS which lies on a complete Riemannian manifold.
\section{Reachable Set}


To calculate the guaranteed reachable set of an unknown system, we first use Theorem \ref{Theorem 1} from Section III to determine the guaranteed velocity sets for different $x \in M$. Then, similar to \citep{shafa2022reachability,shafa2022maximal}, we leverage these sets to find a control system whose reachable set is an underapproximation of the GRS. 

We first define an ordinary differential inclusion based on the underapproximation of the GVS in Theorem \ref{Theorem 1}. Then, we formulate a control system whose set of velocities are equal the right hand side of the ODI. The reachable set of this formulated control systems contained in the GRS by Proposition~\ref{Proposition1}. Theorem \ref{Theorem 1} and Proposition~\ref{Proposition1} show that the reachable set of 

\vskip -5pt

\begin{equation}\label{Thm1_ODI}
    \dot{x} \in \overline{\mathcal{V}}^\mathcal{G}_x,\quad x(0) = x_0
\end{equation}

\noindent is a subset of $\mathcal{R}^\mathcal{G}_x(T,x_0)$.

We follow similar steps as in \citep{shafa2022reachability,shafa2022maximal}: Let $\overline{d}(x_0,x)$ and $\alpha(x_0,x)$ be as defined in Theorem \ref{Theorem 1}. Analogous to the interpretation of the dynamics \eqref{nonlinearSystem} as an ODI \eqref{availableVelocitySet}, we can interpret \eqref{Thm1_ODI} as a control system

\vskip -5pt

\begin{equation}\label{ControlSystem1_GRS}
    \dot{x} = a + g(x_0,x)u,\quad x(0) = x_0,
\end{equation}

\noindent on $\{x~|~d(x_0,x) \leq \overline{d}(x_0,x)\}$, with $a = \Tilde{\tau}_{x_0}^xf(x_0)$, $u~\in~\mathbb{B}^n(0;1)$, and $g(s_0,s) = \alpha(x_0,x)$ if $d(s_0,s) \leq \overline{d}(x_0,x)$. We thus obtain the following result.

\begin{theorem}\label{Theorem_GRS1}
    Let $\overline{\mathcal{R}}(T,x_0)$ be defined as the reachable set of \eqref{ControlSystem1_GRS} at time $T$. Then, $\overline{\mathcal{R}}(T,x_0) \subset \mathcal{R}^{\mathcal{G}}(T,x_0)$.
\end{theorem}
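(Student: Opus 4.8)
The plan is to show that every trajectory of the control system \eqref{ControlSystem1_GRS} is also a trajectory of the differential inclusion \eqref{Thm1_ODI}, and then invoke Proposition~\ref{Proposition1} to place its endpoint in $\mathcal{R}^{\mathcal{G}}(T,x_0)$. Concretely, I would take an arbitrary point $y \in \overline{\mathcal{R}}(T,x_0)$, so that $y = \phi(T)$ for some solution $\phi:[0,T]\to M$ of \eqref{ControlSystem1_GRS} generated by an admissible control $u:[0,T]\to\mathbb{B}^n(0;1)$, with $\phi(0)=x_0$. By construction the trajectory stays in the region $\{x \mid d(x_0,x)\le\overline{d}(x_0,x)\}$, which is exactly the domain on which Theorem~\ref{Theorem 1} guarantees the ball characterization of $\overline{\mathcal{V}}^{\mathcal{G}}_x$ is valid.

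The central step is the pointwise containment $\dot{\phi}(t)\in\overline{\mathcal{V}}^{\mathcal{G}}_{\phi(t)}$ for almost every $t\in[0,T]$. Writing $x=\phi(t)$, the velocity produced by \eqref{ControlSystem1_GRS} is $\dot{x}=\tilde{\tau}_{x_0}^x f(x_0)+\alpha(x_0,x)\,u(t)$ with $\|u(t)\|\le 1$, so $\dot{x}$ lies in the closed ball $\mathbb{B}^n\!\left(\tilde{\tau}_{x_0}^x f(x_0);\alpha(x_0,x)\right)$. It remains to argue that this velocity also lies in $\mathrm{Im}(\tilde{\tau}_{x_0}^x G(x_0))$, so that it falls in the intersection \eqref{Theorem1_Equation} defining $\overline{\mathcal{V}}^{\mathcal{G}}_x$. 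This is where I expect the main obstacle: the control direction $u\in\mathbb{B}^n(0;1)$ in \eqref{ControlSystem1_GRS} ranges over all of $T_xM$, whereas $\overline{\mathcal{V}}^{\mathcal{G}}_x$ is restricted to the subspace $\mathrm{Im}(R)$. The clean resolution is to interpret the control in \eqref{ControlSystem1_GRS} as constrained to the admissible subspace $\mathrm{Im}(\tilde{\tau}_{x_0}^x G(x_0))=\mathrm{Im}(R)$ (which is known by Assumption~\ref{Assumption 2}); since $\tilde{\tau}_{x_0}^x f(x_0)\in\mathrm{Im}(R)$ as well, the affine combination stays in $\mathrm{Im}(R)$ and the intersection is realized. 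Thus $\dot{\phi}(t)\in\overline{\mathcal{V}}^{\mathcal{G}}_{\phi(t)}$, and since Theorem~\ref{Theorem 1} gives $\overline{\mathcal{V}}^{\mathcal{G}}_{\phi(t)}\subseteq\mathcal{V}^{\mathcal{G}}_{\phi(t)}$, the trajectory $\phi$ satisfies the inclusion \eqref{ODI} at all $t\le T$.

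With the inclusion established, the conclusion is immediate: Proposition~\ref{Proposition1} asserts that any trajectory satisfying \eqref{ODI} up to time $T$ has $\phi(T)\in\mathcal{R}^{\mathcal{G}}(T,x_0)$, hence $y=\phi(T)\in\mathcal{R}^{\mathcal{G}}(T,x_0)$. Since $y$ was arbitrary, $\overline{\mathcal{R}}(T,x_0)\subseteq\mathcal{R}^{\mathcal{G}}(T,x_0)$, which is the claim (the strict-versus-nonstrict distinction in the statement is inessential and follows from the same containment). The only subtlety to handle carefully is the manifold setting: one should note that solutions of \eqref{ControlSystem1_GRS} are understood via the exponential map as in Figure~\ref{fig:Pendulum_ExponentialMap}, so that $\dot\phi(t)$ genuinely lives in $T_{\phi(t)}M$ and the velocity-set membership is meaningful fiberwise on the tangent bundle. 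I would close by remarking that no regularity beyond the existence of absolutely continuous solutions to \eqref{ControlSystem1_GRS} is needed, since the argument is entirely pointwise in $t$.
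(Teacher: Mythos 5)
Your proposal is correct and takes essentially the same route as the paper: the paper's entire proof is the single line ``Proposition~\ref{Proposition1} and Theorem~\ref{Theorem 1} prove the claim,'' which your argument simply unpacks (trajectories of \eqref{ControlSystem1_GRS} satisfy \eqref{Thm1_ODI}, Theorem~\ref{Theorem 1} upgrades this to \eqref{ODI}, and Proposition~\ref{Proposition1} places the endpoint in $\mathcal{R}^{\mathcal{G}}(T,x_0)$). The subtlety you flag --- that the control $u \in \mathbb{B}^n(0;1)$ in \eqref{ControlSystem1_GRS} must be interpreted as restricted to $\mathrm{Im}(\tilde{\tau}_{x_0}^x G(x_0)) = \mathrm{Im}(R)$ so that the velocities land in the intersection \eqref{Theorem1_Equation} rather than the full ball --- is left implicit in the paper, and your resolution is the intended reading.
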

\begin{proof}
    Proposition \ref{Proposition1} and Theorem \ref{Theorem 1} prove the claim.
\end{proof}

Now that we have a control system whose solutions are contained in the guaranteed reachable set, we want to determine these solutions numerically on a manifold. Existing numerical methods such as multistep algorithms and sympletic integration techniques like Runge-Kutta \citep{crouch1993numerical,budd1999geometric} for manifolds can be used to calculate a reachable set. We emphasize that systems are often described by ordinary differential equations in Euclidean spaces which constrain their dynamics to a naturally embedded manifold --- see section I.1, I.2, and I.3 in \citep{hairer2011solving} --- or the use of differential algebraic constraint equations \citep{cross2008level}. The solutions to such control systems are computed numerically on Euclidean spaces and projected onto the manifolds. 

We employ the techniques mentioned above to numerically determine the solution to our derived control system whose reachable set is contained in the GRS. For the aforementioned projection methods, numerically solving for the solution to a system evolving on a manifold requires the extra step of mapping velocities in the tangent space to their corresponding trajectories on $M$ via the \textbf{exponential map} \citep{jost2008riemannian,absil2008optimization}. We formally define the exponential map below.

\begin{definition}[(Geodesic) exponential map]
    Let $M$ be a Riemannian manifold and $x \in M$. The exponential map $$\mathrm{exp}_{x_0}:~U \subseteq T_{x_0}M \to M:~\dot{x} \to x(1)$$ at $x_0$ assigns to $\dot{x} \in T_{x_0}M$ the solution at time $t = 1$ to the geodesic equation with initial state $x_0$ and initial velocity $\dot{x}$ where $x(0) = x_0$ and $\nabla_{\dot{x}}\dot{x} = 0$.
\end{definition}

Recall that $M$ is a connected Riemannian manifold. Thus, by the Hopf-Rinow Theorem \citep[Theorem 1.7.1]{jost2008riemannian}, the exponential map is defined on all $T_xM$ for all $x \in M$ and can be leveraged to calculate a set of geodesic curves on $M$ where each curve serves as a viable trajectory contained in the guaranteed reachable set. In the next section, we demonstrate by example how to apply the results in Sections III and IV to calculate the guaranteed reachable set of an unknown system constrained on an abstract manifold. Trajectories of the known system and the guaranteed underapproximation of its reachable set are compared to illustrate the theory's efficacy.
\section{Numerical Examples}

In this section, we demonstrate how to calculate a guaranteed reachable set provably contained within the true reachable set of an unknown system operating operating on $SO(3)$, the space of three-dimensional rotational matrices. The application in question is to determine whether certain rotations can be achieved within a finite time horizon given initial conditions and knowledge from Assumption \ref{Assumption 1} and \ref{Assumption 2}. We emphasize that the structure of $SO(3)$ is significantly different from that of the Euclidean space. Consequently, it is not possible to determine a guaranteed underapproximation for a system that exists on $SO(3)$ using the results in \citep{shafa2022reachability}. 

\subsection{The Set of all 3 x 3 Real Orthogonal Matrices}

\begin{figure*}[ht] 
  \centering
  \begin{minipage}{0.33\textwidth}
    \centering
    \includegraphics[width=\linewidth]{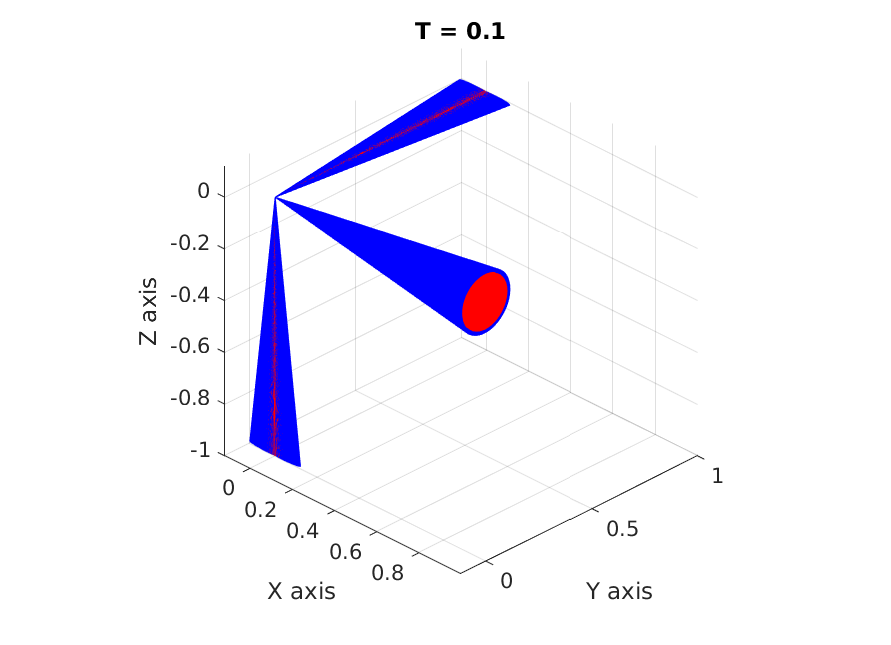} 
    \label{fig:graph1}
  \end{minipage}
  \begin{minipage}{0.33\textwidth}
    \centering
    \includegraphics[width=\linewidth]{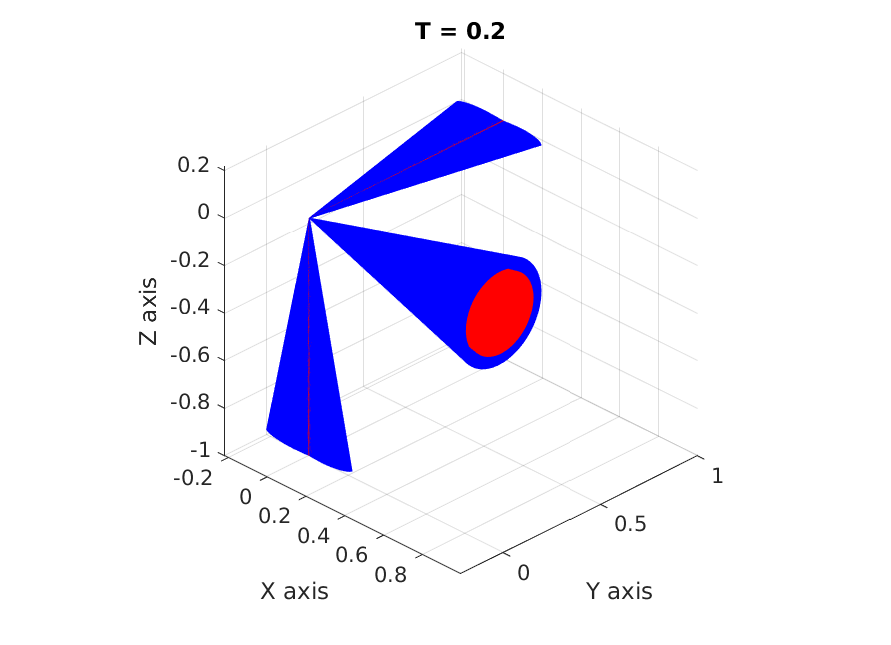} 
    \label{fig:graph2}
  \end{minipage}
  \begin{minipage}{0.33\textwidth}
    \centering
    \includegraphics[width=\linewidth]{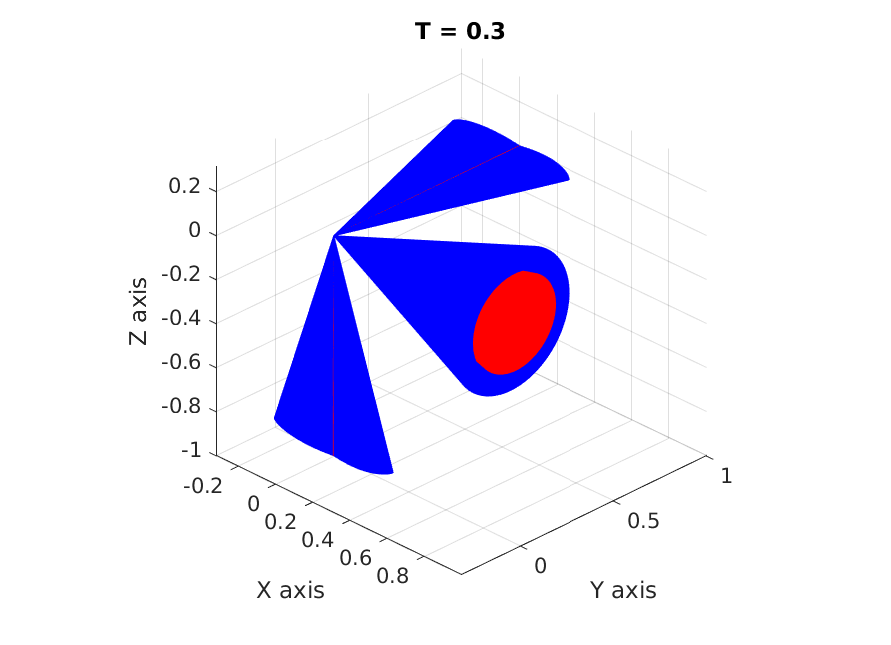} 
    \label{fig:graph3}
  \end{minipage}
    \captionsetup{justification=centering,margin=2cm}
    \caption[font=small]{True reachable set (blue) with the underapproximation $\overline{\mathcal{R}}(T,x_0)$ (red) numerically calculated for $T \in \{0.1, 0.2, 0.3\}$ seconds. Reachable sets display the set of all reachable orthonormal triads with initial condition $X_0$ up to time $T$ using the fully known dynamics \eqref{controlSystem_RotationalMatrices} (blue) and the underapproximated system \eqref{ControlSystem1_GRS} (red).}
    \label{fig:SO(3) Plots}
\end{figure*}

We present an example of a control system which operates on $SO(3) = \{N \in \mathbb{R}^{3\times 3}~|~N^TN = I,\,\mathrm{det}(N) = 1\}$, the special orthogonal group whose elements form the set of all real orthogonal matrices with a determinant of $1$. Let $L(SO(3))$ be the algebra consisting of all $3\times 3$ antisymmetric matrices. A basis for $L(SO(3))$ \citep[Example 8.1]{jurdjevic1972control} is given by the matrices $$K_x = \begin{bmatrix}0 & 0 & 0\\ 0 & 0 & -1\\ 0 & 1 & 0\end{bmatrix},\quad K_y = \begin{bmatrix}0 & 0 & 1\\ 0 & 0 & 0\\ -1 & 0 & 0\end{bmatrix}$$ and $$K_z = \begin{bmatrix}0 & -1 & 0\\ 1 & 0 & 0\\ 0 & 0 & 0\end{bmatrix}.$$ For any point $X \in SO(3)$, we take the basis of $T_XM$ to be $\{K_xX,K_yX,K_zX\}$. We consider $$X(0) = \begin{bmatrix}1 & 0 & 0\\ 0 & 0 & -1\\ 0 & 1 & 0\end{bmatrix}$$ which corresponds to a rotation of $\pi/2$ around the unit vector $(0,1,0)$. To check $X(0) \in SO(3)$, we verify that $\mathrm{det}(X(0)) = 1$ and $X^T(0)X(0) = I$. 

We consider an unknown system evolving on $SO(3)$ with dynamics

\vskip -10pt

\begin{equation}\label{controlSystem_RotationalMatrices}
\begin{gathered}
    \dot{x}(t) = f(x(t)) + G(x(t))u,\quad x(0) = x_0,\\
    \dot{X}(t) = \dot{x}_1K_xX(t) + \dot{x}_2K_yX(t) + \dot{x}_3K_zX(t),
\end{gathered}
\end{equation}

\noindent where $x \in \mathbb{R}^3$, $u \in \mathbb{B}^2(0;1)$, $f = 0$, and $G(x(t))~=~\begin{bmatrix}0 & 0 & 1 + 0.5x_3 \\ 0 & 1 & 0\end{bmatrix}^T$. We operate on the open set $U \in SO(3)$ and $\varphi:U\to \mathbb{R}^3$ such that $(U,\varphi)$ is a chart in $SO(3)$ where $\varphi(U)~=~\{(\psi,\theta,\phi)~:~0<\psi<2\pi,\,0<\theta<\pi,\,0<\phi<2\pi\}$. The initial condition $X(0)$ corresponds to the initial condition $x_0 = (0,\pi/2,0)$. Let $\{e_i:~i=1,\,2,\,3\}$ be an orthonormal triad in $SO(3)$. Each rotation $X$ on $SO(3)$ acts on the triad $\{e_i\}$ to produce another orthonormal triad $\{Xe_i:~i=1,\,2,\,3\}$. Indeed, $X$ is represented under the basis $\{e_i\}$ by the matrix $\begin{bmatrix}Xe_1 & Xe_2 & Xe_3\end{bmatrix}$ with $Xe_i$ as the $i$th column \citep[Section 1.7]{man2022crystallographic}. Local coordinates $x \in \mathbb{R}^3$ represent the Euler angles $(\psi,\theta,\phi)$ which describe the orientation of the orthonormal triad in $SO(3)$.

There are three axes around which $X(t) \in SO(3)$ can rotate, hence this is a $3$-degree of freedom system. We have $\mathcal{U} = \mathbb{B}^2(0;1)$ with $f = 0$, so we can conclude that this is an underactuated control system consistent with Assumption~\ref{Assumption 1}. We can thus calculate the guaranteed set of reachable states using the knowledge from Assumption~\ref{Assumption 2} without knowledge of the true dynamics using the novel theory developed in previous sections. To determine an underapproximation of the GVS, we want to calculate 

\vskip -5pt

\begin{equation}\label{Ex2_eq3}
    \overline{\mathcal{V}}^\mathcal{G}_x = \mathbb{B}^{3}(0;\alpha(x_0,x)) \cap \mathrm{Im}(\tilde{\tau}_{x_0}^xG(x_0)),
\end{equation}

\noindent so we need to solve for $\alpha(x_0,x)$. It follows that we need to quantify $\tilde{\tau}_{x_0}^xf(x_0)$,  $\|\tilde{\tau}_{x_0}^xG^\dagger(x_0)\|^{-1}$, $d(x_0,x)$, $L_f$, $L_G$, $\|H_x^{-1}\|^{-1}$, $\|H_x\|$, and $\Gamma_{ij}^k(x)$ to calculate the term $\alpha(x_0,x)$ defined in \eqref{Thm1_UpperBound} for any $X~\in~SO(3)$. By Assumption \ref{Assumption 2}, we know $f(x_0) = 0$ and so trivially $\tilde{\tau}_{x_0}^xf(x_0)~=~0$ and $L_f = 0$. Since initial conditions $(x_{1}(0),x_2(0),x_3(0)) = (\psi_0,\theta_0,\phi_0) = (0,\pi/2,0)$, it follows that $\|\tilde{\tau}_{x_0}^xG^\dagger(x_0)\|^{-1} = 1$. Note that by metric compatibility, $\|\tilde{\tau}_{x_0}^xG^\dagger(x_0)\|^{-1}$ is invariant under the transport $\tilde{\tau}$. 

On the open set $U$ defined earlier, $d(x,x_0)$ using the Euler angles $(\psi,\theta,\phi)$ is defined as the misorientation angle --- the angle between two elements $X_0,X \in SO(3)$. From \citep[Section 10.2]{man2022crystallographic} we can define the misorientation of $X$ with respect to $X_0$ given below $$d(x_0,x) = \cos^{-1}\left({\frac{1}{2}\mathrm{Tr}(X_0X^T) - 0.5}\right).$$ In \citep[Section 10.3]{man2022crystallographic}, for the neighborhood $U$ we calculate the metric tensor and inverse metric tensor as $$H_x = \begin{bmatrix}1 & 0 & \cos{\theta}\\ 0 & 1 & 0\\ \cos{\theta} & 0 & 1\end{bmatrix},\,H^{-1}_x = \begin{bmatrix}
    \frac{1}{\sin^2{\theta}} & 0 & -\frac{cos{\theta}}{\sin^2{\theta}}\\ 0 & 1 & 0\\ -\frac{cos{\theta}}{\sin^2{\theta}} & 0 & \frac{1}{\sin^2{\theta}}
\end{bmatrix}.$$ With $H_x$ and $H^{-1}_x$ we can calculate $\Gamma_{ij}^k(x)$ using equation \eqref{connectionCoefficients}. Lastly, for some neighborhood $V \subset U$ on which we operate, by Assumption \ref{Assumption 2} we are given $L_{g_1}~=~0.65$, $L_{g_2}~=~0$, $\|H_x\|~<~1.2$, $\|H_x^{-1}\|^{-1}~>~0.8$ for all $x \in V$. From Lemma \ref{Lemma_LGBound}, we calculate a valid upper bound for equation \eqref{LGBound} as $L_G~=~n\|H_x^{-1}\|\|H_x\|^{\frac{1}{2}}L_{g_1} \approx 2.7$.


The corresponding GRS calculated using $\mathcal{V}^\mathcal{G}_x$ from \eqref{Ex2_eq3} is shown in Fig. \ref{fig:SO(3) Plots} for $T \in \{0.1, 0.2, 0.3\}$. Fig. \ref{fig:SO(3) Plots} shows all possible orientations of an orthonormal triad with initial condition $X_0$ within a finite time horizon for the known control system \eqref{controlSystem_RotationalMatrices} and the underapproximated control system \eqref{ControlSystem1_GRS}. The blue represents all possible orientations using the known control system \eqref{controlSystem_RotationalMatrices} while the red represent the same information gathered using solely the novel theory presented in this paper without knowledge of \eqref{controlSystem_RotationalMatrices}. Consistent with the theory presented in this paper, the GRS calculated using the presented theory is contained within the interior of the union of all blue triads which represent the reachable set calculated using the known dynamics of system \eqref{controlSystem_RotationalMatrices}.
\section{Conclusion and Future Work}

This paper presents a novel approach for determining an underapproximation of an unknown nonlinear control system's reachable set which lies on any complete Riemannian manifold. By assuming a nonlinear control-affine structure and knowledge of the dynamics at a single point, we can produce a set $\overline{\mathcal{R}}(T,x_0) \subseteq \mathcal{R}^\mathcal{G}(T,x_0)$ which consists of provably reachable trajectories for the unknown system. The underapproximation relies on an intermediate approximation of the guaranteed set of reachable states by an ODI $\dot{x} \in \mathcal{V}^\mathcal{G}_x$ where its right-hand side is a set of guaranteed velocities for the unknown nonlinear system. We demonstrate the efficacy of our approach through application on a system that lies on the abstract manifold of real orthogonal matrices $SO(3)$ whose structure is not diffeomorphic to the Euclidean space.

A natural area of future work is to increase the knowledge of the system dynamics beyond Riemannian Lipschitz bounds, i.e., to utilize additional information to increase the size of our underapproximations without knowledge of the true system dynamics. Instead of maximizing the knowledge from just one trajectory, we can incorporate the knowledge from many trajectories. This can potentially generalize results to include all underactuated nonlinear control-affine systems by altering the space in which the unknown dynamics lie. Another natural extension of this research is to extend the results beyond the deterministic domain and consider how similar methods can apply to models constructed by neural nets with parametric noise. This noise would pertain to modeling uncertainty, possibly due to to error in system identification for unknown systems using multiply trajectories. Such efforts would provide certifiable capabilities for neural network system models which often lack robust provable performance guarantees. 
\section{Acknowledgements}

We thank Yiming Meng from the University of Illinois Urbana-Champaign for helping revise the paper to make concepts related to advanced geometry more clear to controls engineers and theorists without a rigorous geometric background. 


\bibliographystyle{IEEEtran}
\bibliography{root.bib}

\section{Appendix}
\subsection{Proofs of Supporting Lemmata}

\noindent \begin{proof}[Proof of Lemma \ref{Lemma_normEquivalenceRelations}]
The matrix $H_x$ is diagonalizable by the symmetry of $H_x$ \citep{strang2016introduction}, so by \eqref{riemannianNorm} we have $|v|_{h_x}^2 = \sum_i\lambda_i(H_x)(v^i)^2 \leq \lambda_{\mathrm{max}}(H_x)\|v\|^2$. For symmetric matrices, $\lambda_i(H_x) = \sigma_i(H_x)$ for all $i$ \citep{strang2016introduction}, hence $\lambda_{\mathrm{max}}(H_x) = \|H_x\|$. Taking the square root of both sides gives us the right-hand side of \eqref{vectorNormEquivalency}. Repeating a similar process we see that $\lambda_{\mathrm{min}}(H_x)\|v\|^2 \leq |v|_{h_x}^2$ implies the left-hand inequality in \eqref{vectorNormEquivalency}.

    Equations \eqref{riemannianMatrixNorm} and \eqref{vectorNormEquivalency} imply $$|A|_{h_x} \leq \sup_{v\neq 0}\frac{\|H_x\|^{\frac{1}{2}}\|Av\|}{\|H^{-1}_x\|^{-\frac{1}{2}}\|v\|} = \frac{\|H_x\|^{\frac{1}{2}}}{\|H^{-1}_x\|^{-\frac{1}{2}}}\sup_{v \neq 0}\frac{\|Av\|}{\|v\|}.$$ By definition of the Euclidean matrix norm \citep{strang2016introduction} and the inequality above, we get the right-hand side of \eqref{matrixNormEquivalency}. Similar steps using the lower bound in \eqref{vectorNormEquivalency} gives us the left-hand side of \eqref{matrixNormEquivalency}.
\end{proof}

\noindent\begin{proof}[Proof of Lemma \ref{Lemma_LGBound}]
We begin by setting $l \in [m]$ to the index such that $L_{g_l}~=~\overline{L}_{g}$. By the equivalence relation \eqref{vectorNormEquivalency}, we have $\|H_x^{-1}\|^{-\frac{1}{2}}\|\tau_{x_0}^xg_l(x_0) - g_l(x)\| \leq |\tau_{x_0}^xg_l(x_0) - g_l(x)|_{h_x}$. Additionally, through vector norm equivalence relations \citep{strang2016introduction}, $\|\tau_{x_0}^xg_l(x_0) - g_l(x)\|_1 \leq \sqrt{n}\|\tau_{x_0}^xg_l(x_0) - g_l(x)\|$. Using the bound from Assumption \ref{Assumption 2}, we can thus conclude that 

    \begin{equation}\label{Lemma_LGBound_eq1}
        \frac{\|H_{x}^{-1}\|^{-\frac{1}{2}}}{\sqrt{n}}\|\tau_{x_0}^xg_l(x_0) - g(x)\|_1 \leq \overline{L}_gd(x_0,x).
    \end{equation}
    
    \noindent Notice that, by the definition of the induced matrix $1$-norm \citep{strang2016introduction}, $\|\tau_{x_0}^xg_l(x_0) - g(x)\|_1 = \|\tau_{x_0}^xG(x_0) - G(x)\|_1$. By matrix norm equivalence relations \citep{strang2016introduction}, $\|\tau_{x_0}^xG(x_0) - G(x)\| \leq \sqrt{n}\|\tau_{x_0}^xG(x_0) - G(x)\|_1$. Using \eqref{matrixNormEquivalency}, we also get $(\|H_x\|\|H_x^{-1}\|)^{-\frac{1}{2}}|\tau_{x_0}^xG(x_0) - G(x)|_{h_x} \leq \|\tau_{x_0}^xG(x_0) - G(x)\|$. This inequality implies $$\frac{(\|H_x\|\|H_x^{-1}\|)^{-\frac{1}{2}}}{\sqrt{n}}|\tau_{x_0}^xG(x_0) - G(x)|_{h_x} \leq \|\tau_{x_0}^xg_l(x_0) - g(x)\|_1.$$ Substituting the left-hand side of this inequality into the $\|\tau_{x_0}^xg_l(x_0) - g(x)\|_1$ term in \eqref{Lemma_LGBound_eq1} and multiplying both sides by $n\|H_x^{-1}\|\|H_x\|^{\frac{1}{2}}$ proves the claim.
\end{proof}

\noindent\begin{proof}[Proof of Lemma \ref{Lemma_Domain}]

To prove the claim, we utilize Weyl's inequality \citep{stewart1998perturbation} for singular values, which provides a Euclidean-normed bound on the perturbations of singular values. Consequently, we are motivated to determine an upper bound on $\|\Tilde{\tau}_{x_0}^xG(x_0) - G(x)\|$ to exploit metric compatibility properties between the flat connection and its corresponding Euclidean norm. We note that $\|\Tilde{\tau}_{x_0}^xG(x_0) - G(x)\| = \|\Tilde{\tau}_{x_0}^xG(x_0) - \tau_{x_0}^xG(x_0) + \tau_{x_0}^xG(x_0) - G(x)\| \leq \|\Tilde{\tau}_{x_0}^xG(x_0) - \tau_{x_0}^xG(x_0)\| + \|\tau_{x_0}^xG(x_0) - G(x)\|$, so any upper bound on $\|\Tilde{\tau}_{x_0}^xG(x_0) - \tau_{x_0}^xG(x_0)\| + \|\tau_{x_0}^xG(x_0) - G(x)\|$ is also an upper bound on $\|\Tilde{\tau}_{x_0}^xG(x_0) - G(x)\|$. 
    
    We begin by finding an upper bound on $\|\tau_{x_0}^xG(x_0) - G(x)\|$ using inequality \eqref{LGBound} from Lemma \ref{Lemma_LGBound} and substituting the left-hand side of inequality \eqref{matrixNormEquivalency} into the left-hand side of \eqref{LGBound}. We now multiply both sides by $(\|H_x^{-1}\|\|H_x\|)^{\frac{1}{2}}$ to get
    \vskip -10pt
    \begin{equation}\label{Lemma_Domain_eq1}
        \|\tau_{x_0}^xG(x_0) - G(x)\| \leq (\|H_x^{-1}\|\|H_x\|)^{\frac{1}{2}}L_Gd(x_0,x).
    \end{equation}

    Next, we need to quantify $|\tilde{\tau}_{x_0}^xG(x_0) - \tau_{x_0}^xG(x_0)|_{h_x}$ and use the norm equivalency relations from Lemma \ref{Lemma_normEquivalenceRelations} to find an upper bound on $\|\tilde{\tau}_{x_0}^xG(x_0) - \tau_{x_0}^xG(x_0)\|$. By the definitions of the affine connection \citep[Section 9.1]{frankel2011geometry} and and parallel transport \citep[Definition 4.1.2]{jost2008riemannian}, we find that if we travel along some curve $\gamma:[0,1] \to M$, the parallel transport provides a unique parallel vector field in terms of the covariant derivative corresponding to the appropriate connection. We want to measure the difference of these aforementioned parallel vector fields expressed in terms of the covariant derivative. Thus, we measure $|\tilde{\tau}_{x_0}^xG(x_0)~-~\tau_{x_0}^xG(x_0)|_{h_x}$ by quantifying $|\tilde{\nabla}_{\dot{\gamma}}G(x_0)~-~\nabla_{\dot{\gamma}}G(x_0)|_{h_x}$ where $\tilde{\nabla},\nabla$ are the connections corresponding to $\tilde{\tau}$ and $\tau$ respectively. We then convert to the desired Euclidean metric to apply Weyl's inequality.

    We begin by reminding the reader that $G(x_0)$ is defined $G(x_0) = \begin{bmatrix}g_1(x_0) & \hdots & g_m(x_0)\end{bmatrix}$ such that $g_l(x_0) \in T_{x_0}M$. In other words, $G(x_0)$ is treated as a collection matrix of vectors in $T_{x_0}M$, so to parallel transport $G(x_0)$ is to perform said transport $\tau_{x_0}^xg_l(x_0)$ for all $l \in [m]$. By extension, $\nabla_{\dot{\gamma}}G(x_0) = \begin{bmatrix}\nabla_{\dot{\gamma}}g_1(x_0) & \hdots & \nabla_{\dot{\gamma}}g_m(x_0)\end{bmatrix}$ is also defined component-wise for each component vector $g_l(x_0)$.

    For any $l \in [m]$, using \eqref{covariantDerivative_Definition} and substituting $\Tilde{\Gamma}_{ij}^k = 0$ for all $i,j,k$, we can calculate 

    \begin{equation}\label{Lemma_Domain_eq2}\tilde{\nabla}_{\dot{\gamma}}g_l(x_0) - \nabla_{\dot{\gamma}}g_l(x_0) = \sum_{i,j,k}\dot{\gamma}^i\Gamma_{ij}^k(x)g_l^j(x_0)\Vec{e}_k.
    \end{equation}

    \noindent For notational simplicity, let $g_l^\Gamma$ equal the right-hand side of \eqref{Lemma_Domain_eq2} for any $l \in [m]$. The unique vectors that results from flat and parallel transport satisfy $\tilde{\nabla}_{\dot{\gamma}}g_l(x_0)~=~0$ and $\nabla_{\dot{\gamma}}g_l(x_0)~=~0$ \citep[Definition 4.1.2]{jost2008riemannian} respectively. That is, the right-hand side of \eqref{Lemma_Domain_eq2} equals $0$. With this and $\tilde{\nabla}_{\dot{\gamma}}G(x_0) - \nabla_{\dot{\gamma}}G(x_0) = \begin{bmatrix}g_1^\Gamma & \hdots & g_m^\Gamma\end{bmatrix}$, we conclude
    \begin{equation}\label{Lemma_Domain_eq3}
    \left|\tilde{\tau}_{x_0}^xG(x_0) - \tau_{x_0}^xG(x_0)\right|_{h_x} = \left|\begin{bmatrix}g_1^\Gamma & \hdots & g_m^\Gamma\end{bmatrix}\right|_{h_x}.
    \end{equation}
    
    \noindent The left and right-hand sides of \eqref{Lemma_Domain_eq3} can be bounded by the left and right-hand sides of inequalities \eqref{vectorNormEquivalency} and \eqref{matrixNormEquivalency} respectively. Then we multiply both sides by $(\|H_x^{-1}\|\|H_x\|)^{\frac{1}{2}}$ to get 

    \begin{equation}\label{Lemma_Domain_eq4}
        \|\tilde{\tau}_{x_0}^xG(x_0) - \tau_{x_0}^xG(x_0)\| \leq \|H_x^{-1}\|^{\frac{1}{2}}\|H_x\|\left\|\begin{bmatrix}g_1^\Gamma & \hdots & g_m^\Gamma\end{bmatrix}\right\|.
    \end{equation}

    \noindent From \eqref{Lemma_Domain_eq1} and \eqref{Lemma_Domain_eq4} it follows that 

    \begin{equation}\label{Lemma_Domain_eq5}
    \begin{gathered}
        \|\tilde{\tau}_{x_0}^xG(x_0) - G(x)\| \leq \\
        (\|H_x^{-1}\|\|H_x\|)^{\frac{1}{2}}\left(\|H_x\|^{\frac{1}{2}} \left\|\begin{bmatrix}g_1^\Gamma & \hdots & g_m^\Gamma\end{bmatrix}\right\|+ L_Gd(x_0,x)\right).
    \end{gathered}
    \end{equation}
    
    For notational ease, let us denote $\eta(x_0,x)$ as the right hand side of \eqref{Lemma_Domain_eq5}. As discussed earlier, the flat transport is said to be metric compatible with the Euclidean norm, i.e., $\|G(x_0)\| = \|\tilde{\tau}_{x_0}^xG(x_0)\|$ because $\Gamma_{ij}^k = 0$ for the flat connection and its corresponding Euclidean metric. We can now apply Weyl's inequality, which states $\|\sigma_s(G(x)) - \sigma_s(\tilde{\tau}_{x_0}^xG(x_0))\| \leq \|G(x) - \tilde{\tau}_{x_0}^xG(x_0)\| \leq \eta(x_0,x)$ such that $1 \leq s \leq r$ with $r = \mathrm{rank}(\tilde{\tau}_{x_0}^xG(x_0))$. By the Eckhart-Young Mirsky theorem \citep{strang2016introduction} and singular value decomposition, it holds that $\sigma_r(G(x_0)) = \|G^\dagger(x_0)\|^{-1} = \|\tilde{\tau}_{x_0}^xG^\dagger(x_0)\|^{-1}$; the second equality stems from the norm preservation property of compatible metrics. Note that $\|\tilde{\tau}_{x_0}^xG^\dagger(x_0)\|^{-1}$ is equal to the smallest nonzero singular value of $\tilde{\tau}_{x_0}^xG(x_0)$. Let $d(x_0,x)$ satisfy
    
    \vskip -10pt
    \small
    \begin{equation}\label{Lemma_Domain_eq6}
        d(x_0,x) < \frac{\|\tilde{\tau}_{x_0}^xG^\dagger(x_0)\|^{-1} - \|H^{-1}_x\|^{\frac{1}{2}}\|H_x\|\left\|\begin{bmatrix}g_1^\Gamma & \hdots & g_m^\Gamma\end{bmatrix}\right\|}{(\|H_x^{-1}\|\|H_x\|)^{\frac{1}{2}}L_G}.
    \end{equation}
    \normalsize
    
    \noindent Substituting the right-hand side of \eqref{Lemma_Domain_eq6} into \eqref{Lemma_Domain_eq5} we get $\|\tilde{\tau}_{x_0}^xG(x_0) - G(x)\| \leq \|\tilde{\tau}_{x_0}^xG^\dagger(x_0)\|^{-1} = \sigma_r(\tilde{\tau}_{x_0}^xG(x_0))$. Thus, by Weyl's inequality we have $\|\sigma_s(G(x)) - \sigma_s(\tilde{\tau}_{x_0}^xG(x_0))\| < \sigma_r(\tilde{\tau}_{x_0}^xG(x_0))$, i.e., $\sigma_s(G(x)) > 0$. Hence, $\mathrm{rank}(G(x)) \geq \mathrm{rank}(\tilde{\tau}_{x_0}^xG(x_0))$. 

    We now show that for any $d(x_0,x)$ which satisfies \eqref{Lemma_Domain_eq6}, $\mathrm{rank}(G(x)) = \mathrm{rank}(\tilde{\tau}_{x_0}^xG(x_0))$. If we substitute the left-hand side of \eqref{Lemma_Domain_eq4} into the right-hand side of \eqref{Lemma_Domain_eq6}, we get $$d(x_0,x) < \frac{\|\tilde{\tau}_{x_0}^xG^\dagger(x_0)\|^{-1} - \|\tilde{\tau}_{x_0}^xG(x_0) - \tau_{x_0}^xG(x)\|}{(\|H_x^{-1}\|\|H_x\|)^{\frac{1}{2}}L_G}.$$ This inequality implies $$\|\tilde{\tau}_{x_0}^xG(x_0) - \tau_{x_0}^xG(x)\| < $$ $$\|\tilde{\tau}_{x_0}^xG^\dagger(x_0)\|^{-1} - (\|H_x^{-1}\|\|H_x\|)^\frac{1}{2}L_Gd(x_0,x).$$ The term $(\|H_x^{-1}\|\|H_x\|)^\frac{1}{2}L_Gd(x_0,x)$ is nonnegative, thus $\|\tilde{\tau}_{x_0}^xG(x_0) - \tau_{x_0}^xG(x)\| < \|\tilde{\tau}_{x_0}^xG^\dagger(x_0)\|^{-1}$ holds as well. Following similar steps as above, we see that $\mathrm{rank}(\tilde{\tau}_{x_0}^xG(x_0)) \geq \mathrm{rank}(\tau_{x_0}^xG(x_0))$. Recall we defined $G(x) = RH(x)$ by Assumption \ref{Assumption 1}, so $\mathrm{Im}(G(x)) \subset \mathrm{Im}(R)$. Since $\mathrm{Im}(G(x_0)) = \mathrm{Im}(R)$, then $\mathrm{Im}(G(x)) \subset \mathrm{Im}(G(x_0))$. Thus, $\mathrm{rank}(G(x_0)) \geq 
    \mathrm{rank}(G(x))$.  On any complete Riemannian manifold, $\mathrm{rank}(G(x_0)) = \mathrm{rank}(\tau_{x_0}^xG(x_0))$. So we have $\mathrm{rank}(\Tilde{\tau}_{x_0}^xG(x_0)) \geq \mathrm{rank}(\tau_{x_0}^xG(x_0)) = \mathrm{rank}(G(x_0)) \geq \mathrm{rank}(G(x))$. Hence, $\mathrm{rank}(\Tilde{\tau}_{x_0}^xG(x_0)) = \mathrm{rank}(G(x))$. By Assumption \ref{Assumption 1}, we now conclude that $\mathrm{Im}(\Tilde{\tau}_{x_0}^xG(x_0)) = \mathrm{Im}(G(x)) = \mathrm{Im}(R)$ when \eqref{Lemma_Domain_eq6} is satisfied.
\end{proof}

\subsection{Proof of Theorem \ref{Theorem 1}}

\noindent \begin{proof}[Proof of Theorem \ref{Theorem 1}]

Let $d \in \mathrm{Im}(\tilde{\tau}_{x_0}^xG(x_0)) = \mathrm{Im}(R)$ such that $\|d\|~=~1$. We will prove that if $|k| \leq \alpha(x_0,x)$, then 

\begin{equation}\label{Thm1_eq0}
    kd + \tilde{\tau}_{x_0}^xf(x_0) = f(x) + G(x)u
\end{equation}

\noindent admits a solution $u \in \mathcal{U} = \mathbb{B}^m(0;1)$ for any $(f,G) \in \mathcal{D}_{con}$. If we let $\eta(x_0,x)$ equal the right hand side of \eqref{Lemma_Domain_eq6}, we see that $\overline{d}(x_0,x) \leq \eta(x_0,x)$. Hence, by Assumption \ref{Assumption 1} and Lemma \ref{Lemma_Domain}, $\mathrm{Im}(G(x)) = \mathrm{Im}(R)$. We subtract both sides by $f(x) \in \mathrm{Im}(R)$. With $kd + \tilde{\tau}_{x_0}^xf(x_0) \in \mathrm{Im}(R)$, we have $kd + \tilde{\tau}_{x_0}^xf(x_0) - f(x) \in \mathrm{Im}(R)$. Hence, there exists a vector $\bar{u} \in \mathbb{R}^m$ such that $kd + \tilde{\tau}_{x_0}^xf(x_0) - f(x) = G(x)\bar{u}$. From here, we consider three cases, namely when $m < n$, $m = n$, and $m > n$.

We begin with the case when $m < n$. Through the rank nullity theorem \citep{strang2016introduction}, we can write $\bar{u} = u + u_2$ where $u \in \mathrm{Im}(G^T(x))$ and $u_2 \in \mathrm{Ker}(G(x))$. Thus, $G(x)\bar{u} = G(x)(u + u_2) = G(x)u$, and so $kd + \tilde{\tau}_{x_0}^xf(x_0) - f(x) = G(x)u$. We multiply both sides of $kd + \tilde{\tau}_{x_0}^xf(x_0) - f(x) = G(x)u$ on the left by $G^\dagger(x)$ and get $G^\dagger(x)(kd + \tilde{\tau}_{x_0}^xf(x_0) - f(x)) = G^\dagger(x)G(x)u$. The term $G^\dagger(x)G(x)u$ is the projection of $u$ onto $\mathrm{Im}(G^T(x))$ \citep{strang2016introduction}. Given that $u \in \mathrm{Im}(G^T(x))$, by definition of a projection, $G^\dagger(x)(kd + \tilde{\tau}_{x_0}^xf(x_0) - f(x)) = G^\dagger(x)G(x)u = u$. Taking the Euclidean norm of both sides, we show that if

\begin{equation}\label{Thm1_eq1}
    \left\|G^\dagger(x)(kd + \tilde{\tau}_{x_0}^xf(x_0) - f(x))\right\| \leq 1,
\end{equation}

\noindent then $\|u\| \leq 1$, i.e., $u \in \mathcal{U}$. For the instances where $m = n$ and $m > n$, $\mathrm{span}\{G(x)\} = T_xM$, so we can trivially follow a similar procedure to conclude that if \eqref{Thm1_eq1} is satisfied, there exists at least one $u \in \mathcal{U}$ which serves as a solution to \eqref{Thm1_eq0}.

The product and triangle inequality for norms gives us $\|G^\dagger(x)(kd + \tilde{\tau}_{x_0}^xf(x_0) - f(x))\| \leq \|G^\dagger(x)\|(|k|\|d\| + \|\tilde{\tau}_{x_0}^xf(x_0) - f(x)\|)$. Hence, any $k$ which satisfies 

\begin{equation}\label{Thm1_eq2}
    \left\|G^\dagger(x)\right\|(|k|\|d\| + \left\|\tilde{\tau}_{x_0}^xf(x_0) - f(x)\right\|) \leq 1
\end{equation}

\noindent satisfies \eqref{Thm1_eq1} and admits a solution $u \in \mathcal{U}$. We may substitute $\|d\| = 1$; multiplying on the left by $\|G^\dagger(x)\|^{-1}$ and subtracting $\|\tilde{\tau}_{x_0}^xf(x_0) - f(x)\|$ from both sides, we see that any $k$ which satisfies

\begin{equation}\label{Thm1_eq3}
    |k| \leq \left\|G^\dagger(x)\right\|^{-1} - \left\|\tilde{\tau}_{x_0}^xf(x_0) - f(x)\right\|
\end{equation}

\noindent also satisfies \eqref{Thm1_eq2} and thus admits a solution $u \in \mathcal{U}$.

From \eqref{Lemma_Domain_eq5} in the proof of Lemma \ref{Lemma_Domain} we have an upper bound on $\|G(x) - \tilde{\tau}_{x_0}^xG(x_0)\|$. We then apply the reverse triangle inequality and add $\|\tilde{\tau}_{x_0}^xG(x_0)\|$ to both sides of \eqref{Lemma_Domain_eq5} to get 

\begin{equation}\label{Thm1_eq10}
\begin{gathered}
    \|G(x)\| \leq \|\tilde{\tau}_{x_0}^xG(x_0)\| + \\ \left(\frac{\|H_x\|}{\|H_x^{-1}\|^{-1}}\right)^{\frac{1}{2}}\left(\|H_x\|^{\frac{1}{2}}\left\|\begin{bmatrix}g_1^\Gamma & \hdots & g_m^\Gamma\end{bmatrix}\right\| + L_Gd(x_0,x)\right).
\end{gathered}
\end{equation}

\noindent Note that $\|G^\dagger(x)\|^{-1}$ is equal to the smallest singular value of $G(x)$ and that $\|G^\dagger(x)\|^{-1} = \|\tilde{\tau}_{x_0}^xG^\dagger(x)\|^{-1}$ by metric compatibility. Thus, we can apply Weyl's inequality for singular values \citep{stewart1998perturbation} to get

\vskip -5pt

\begin{equation}\label{Thm1_eq6}
\begin{gathered}
    \|G^\dagger(x)\|^{-1} \geq \|\tilde{\tau}_{x_0}^xG^\dagger(x_0)\|^{-1} - \\ \left(\frac{\|H_x\|}{\|H_x^{-1}\|^{-1}}\right)^{\frac{1}{2}}\left(\|H_x\|^{\frac{1}{2}}\left\|\begin{bmatrix}g_1^\Gamma & \hdots & g_m^\Gamma\end{bmatrix}\right\| + L_Gd(x_0,x)\right).
\end{gathered}
\end{equation} 

\vskip -5pt

We also want to bound $|\tilde{\tau}_{x_0}^xf(x_0) - f(x)|_{h_x}$ in terms we know from our assumptions. It holds that $|\tilde{\tau}_{x_0}^xf(x_0) - f(x)|_{h_x} \leq |\tilde{\tau}_{x_0}^xf(x_0) - \tau_{x_0}^xf(x_0)|_{h_x} + |\tau_{x_0}^xf(x_0) - f(x)|_{h_x}$. From Assumption~\ref{Assumption 2}, we know that $|\tau_{x_0}^xf(x_0) - f(x)|_{h_x} \leq L_fd(x_0,x)$. The left-hand side of inequality \eqref{vectorNormEquivalency} the inequality above imply that $\|\tau_{x_0}^xf(x_0) - f(x)\| \leq \|H_x^{-1}\|^{\frac{1}{2}}L_fd(x_0,x)$. Repeating a similar process as in the proof of Lemma \ref{Lemma_Domain}, we have that $$|\tilde{\tau}_{x_0}^xf(x_0) - \tau_{x_0}^xf(x_0)|_{h_x} \leq \left|\sum_{i,j,k}\dot{\gamma}^i\Gamma_{ij}^kf^j(x_0)\Vec{e}_k\right|_{h_x}$$ where vectors $\Vec{e}_k$ form the basis for $T_xM$. Again, the inequalities in \eqref{vectorNormEquivalency} with the inequality above imply that 
\vskip -5pt
\small

\begin{equation*}
    \|\tilde{\tau}_{x_0}^xf(x_0) - \tau_{x_0}^xf(x_0)\| \leq \|H_x^{-1}\|^{\frac{1}{2}} \|H_x\|^{\frac{1}{2}}\left\|\sum_{i,j,k}\dot{\gamma}^i\Gamma_{ij}^kf^j(x_0)\Vec{e}_k\right\|
\end{equation*}

\normalsize

\noindent holds. With both terms bounded, we can conclude that

\begin{equation}\label{Thm1_eq7}
\begin{gathered}
    \|\tilde{\tau}_{x_0}^xf(x_0) - f(x)\| \leq \\ \|H_x^{-1}\|^{\frac{1}{2}} \left(\|H_x\|^{\frac{1}{2}}\left\|\sum_{i,j,k}\dot{\gamma}^i\Gamma_{ij}^kf^j(x_0)\Vec{e}_k\right\| + L_fd(x_0,x)\right).
\end{gathered}
\end{equation}

\noindent From \eqref{Thm1_eq6} and \eqref{Thm1_eq7}, if $k$ satisfies

\begin{equation}\label{Thm1_eq8}
    |k| \leq \alpha(x_0,x),
\end{equation}

\noindent then it satisfies \eqref{Thm1_eq3} and by extension \eqref{Thm1_eq1}, thereby admitting a solution $u \in \mathcal{U}$ and proving the claim.
\end{proof}
\end{document}